\newtheorem{assumption}{Assumption}
\newtheorem{definition}{Definition}
\newtheorem{lemma}{Lemma}
\newtheorem{theorem}{Theorem}
\newtheorem{proposition}{Proposition}
\newtheorem{corollary}{Corollary}
\newtheorem{remark}{Remark}
\newacronym{ctmc}{CTMC}{Continuous Time Markov Chain}
\newacronym{hji}{HJI}{Hamilton-Jacobi-Isaacs}
\newacronym{pdmp}{PDMP}{Piecewise Deterministic Markov Process}
\long\def\rev#1{#1}
\title{A Games-in-Games Paradigm for Strategic Hybrid Jump-Diffusions: Hamilton-Jacobi-Isaacs Hierarchy and Spectral Structure}
\author{Yunian Pan \& Quanyan Zhu\\
Department of Electrical \& Computer Engineering\\
New York University\\
370 Jay St, Brooklyn, NY 11201, USA}
\begin{document}

\maketitle

\begin{abstract}
\rev{This paper develops a hierarchical games-in-games control architecture for hybrid stochastic systems governed by regime-switching jump-diffusions. We model the interplay between continuous state dynamics and discrete mode transitions as a bilevel differential game: an inner layer solves a robust stochastic control problem within each regime, while a strategic outer layer modulates the transition intensities of the underlying Markov chain. A Dynkin-based analysis yields a system of coupled Hamilton-Jacobi-Isaacs (HJI) equations, for which we provide a feedback Stackelberg verification principle and establish well-posedness of the resulting coupled Riccati flow. In the structured Linear-Quadratic and Exponential-Affine specializations the hierarchy reduces to tractable coupled differential systems; this reduction recovers the classical Lyapunov--Metzler and Riccati--Metzler frameworks as special cases, and we characterize the spectral dissipation of the closed-loop generator selected by the outer game. The framework is demonstrated through a case study on adversarial market microstructure, where a risk-aware quoting rule adjusts inventory spreads against latent regime and predatory risks.}
\end{abstract}

\keywords{Regime-Switching Jump-Diffusion Processes \and Hierarchical Control \and Stochastic Differential Games \and Hamilton-Jacobi-Isaacs Equations \and Coupled Riccati Equations \and Avellaneda-Stoikov Inventory Game}

\section{Introduction}

Hybrid systems, characterized by the interplay between continuous state dynamics and discrete event transitions, constitute a fundamental modeling paradigm for complex socio-economic engineering systems. Within this broad class, \textit{regime-switching jump-diffusions} occupy a central role, capturing systems where continuous stochastic trajectories are modulated by a hidden or observable Markov chain. Applications range from fault-tolerant control in networked systems \cite{YinZhu2010,pan2023resilience,pan2022poisoned} and cyber-physical security \cite{ZhuBasar2015,Pasqualetti2015Geometric,pan2023stochastic} to economic systems \cite{MaoYuan2006}. \rev{Beyond control within a fixed jump structure, a substantial body of work secures estimation and control for uncertain, switched, and networked systems, including asynchronous $H_\infty$ filtering for Markov jump systems~\citep{FangRenWangStojanovicHe2024} and robust identification of stochastic nonlinear parameter-varying models~\citep{MoratoStojanovic2021}, together with resilient consensus tracking and vibration suppression for multi-agent helicopter slung-load systems~\citep{PengSongSongTejadoStojanovic2026}. These approaches secure estimation and tracking against noise, faults, and adversarial perturbations under a given switching law; the present paper is complementary, placing the \emph{regime-transition law itself} inside a two-player zero-sum game, so that robustness is sought against an adversary shaping the discrete mode dynamics in addition to the continuous trajectory.}

In these settings, decision-making is rarely monolithic. It operates hierarchically: fast-timescale controllers regulate the continuous state (e.g., stabilizing a plant or hedging a portfolio), while slow-timescale policies influence the discrete operating modes (e.g., system reconfiguration or regime induction). However, classical literature typically decouples these layers. The theory of \textit{Piecewise-Deterministic Markov Processes} (PDMPs) \cite{Davis1984PDMP,CostaStability1999} and switching diffusions \cite{YinZhu2010} generally treats the regime transition mechanism as either exogenous (governed by nature) or subject to a single controller's optimization (optimal switching) \cite{Kharroubi2010Optimal}.

A critical gap exists in modeling \textit{adversarial hybrid interactions}, where the regime transitions themselves are the outcome of a strategic game. For instance, in cyber-physical systems, an attacker may seek to destabilize the system by inducing transitions to vulnerable modes \cite{ChenVaidya2018}, while a defender attempts to harden the transition logic. Existing differential game theory \cite{BasarOlsder1999,Buckdahn2008BSDE} provides robust tools for the continuous layer but typically assumes a fixed or purely stochastic discrete structure. Conversely, impulse games \cite{Basei2022Nonzero} focus on discrete interventions but often abstract away the continuous-time feedback loops.

This paper bridges this gap by developing a hierarchical \textit{Games-in-Games} control architecture for regime-switching jump-diffusions. We construct a bilevel system where a fast inner layer solves a robust stochastic differential game within each mode, while a strategic outer layer actively modulates the \textit{transition intensity kernel} of the underlying Markov chain. This structure, illustrated in Figure~\ref{fig:gig-core}, formalizes the problem of \textit{strategic regime control}, applicable to scenarios ranging from adversarial market microstructure to multi-modal resilient control.

\begin{figure}[htbp]
    \centering
    \includegraphics[width=.85\linewidth]{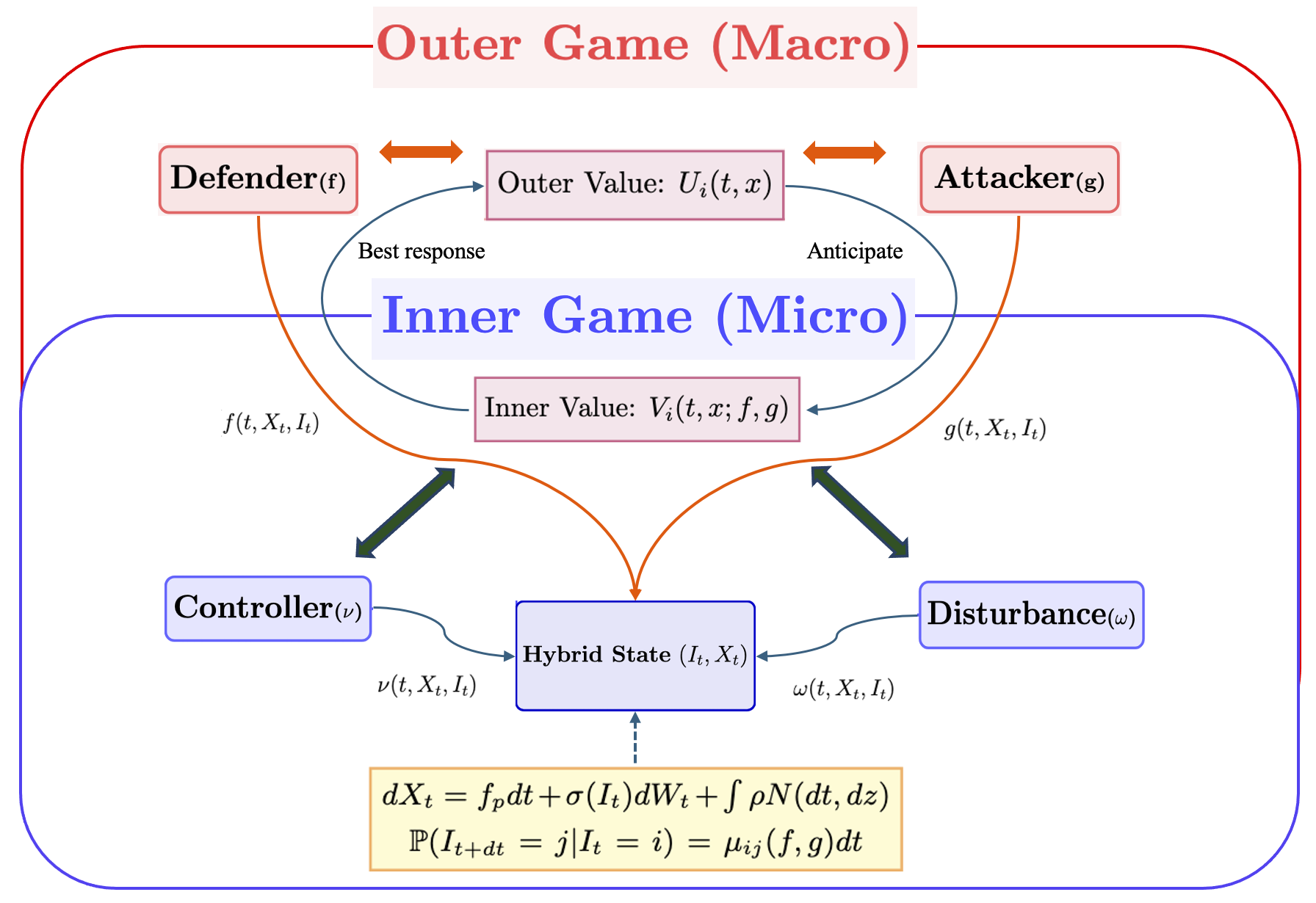}
    \caption{Compact Games-in-Games diagram: macro players shape regime switching; micro controls act on continuous dynamics under the active regime.}
    \label{fig:gig-core}
\end{figure}

Our contributions are threefold:

\begin{enumerate}
    \item  We provide a unified formulation for the games-in-games architecture on jump-diffusion spaces. By leveraging a unified Dynkin formula for switching diffusions \cite{YinZhu2010}, we decompose the bilevel problem into a hierarchy of coupled Hamilton-Jacobi-Isaacs (HJI) equations and provide a feedback Stackelberg verification principle under the stated Isaacs and regularity conditions. This separates the continuous control (inner Isaacs equation) from the strategic switching (outer Hamilton-Jacobi equation) while keeping the cross-layer dependence explicit.
    
    \item  While general HJI equations are computationally demanding, we prove that for the class of Linear-Quadratic (LQ) and Exponential-Affine transformations (as in the market microstructure case study), the hierarchy collapses into a system of \textit{coupled matrix differential equations}, for which we establish symmetry, well-posedness, and a sufficient condition for global existence on a finite horizon. This extends the classical coupled Riccati theory \cite{Dragan2001Riccati,CostaMarkovJump2006} to the game-theoretic setting with endogenous transition rates; the coupled flow recovers the Lyapunov--Metzler and Riccati--Metzler frameworks as its non-adversarial, frozen-rate special cases, and we characterize the spectral dissipation of the closed-loop generator selected by the outer game.
    
    \item  We demonstrate the framework's efficacy through an inventory game case study on adversarial market making. We derive a Risk Isomorphism principle that allows us to characterize equilibrium policies that pre-emptively adjust control gains (inventory spreads) based on the stability gap between regimes.
\end{enumerate}

The remainder of this paper is organized as follows. Section \ref{sec:problem} formulates the two-layer hybrid game. Section \ref{sec:solution} derives the viscosity solution hierarchy and the Dynkin transformation. Section \ref{sec:case} develops the coupled-Riccati solution and spectral structure for the LQ case. Section \ref{sec:application} details the market microstructure application, and Section \ref{sec:conclusion} concludes.

\section{Problem Formulation}\label{sec:problem}

We consider a hybrid decision architecture in which a continuous state evolves
according to mode-dependent stochastic dynamics, while a discrete mode process switches
between a finite collection of regimes. Two layers of strategic decision-making
interact: a fast-timescale controller/disturbance pair regulating the
continuous state, and a slow-timescale pair of agents whose actions influence
the transition rates among the discrete modes. This structure captures a wide
range of multi-layer hybrid systems, including resilient infrastructure networks,
multi-agent cyber-physical systems, and robust control under regime uncertainty.

Let $(\Omega, \mathcal{F}, \mathbb{P})$ be a complete probability space equipped with a filtration $\mathbb{F} = (\mathcal{F}_t)_{t \ge 0}$ satisfying the usual conditions of right-continuity and completeness. The time horizon is finite, $T < \infty$. 
We define the following primitive stochastic processes adapted to $\mathbb{F}$:
\begin{enumerate}
    \item $W = (W_t)_{t \ge 0}$ is a standard $d$-dimensional Brownian motion.
    \item \rev{$N(dt, dz)$ is an integer-valued (marked point) random measure on $[0, T] \times \mathcal{Z}$, where $\mathcal{Z} \subseteq \mathbb{R}^k$ is the space of jump marks (sizes), and $\tilde{N}(dt, dz)$ denotes its compensated counterpart. The jump intensity is \emph{control-dependent}; its precise action-dependent compensator, the resulting martingale measure $\tilde N$, and the action-free baseline intensity $\lambda(dz)$ are specified in Assumption~\ref{assum:regularity}(ii), once the control actions have been introduced in Definition~\ref{def:hybrid}.}
\end{enumerate}

\begin{definition}[Two-layer hybrid decision system]\label{def:hybrid}
Let $\mathcal{I}=\{1,\dots,N\}$ be a finite set of regimes (modes), $\mathbb{U}$ and $\mathbb{W}$ compact convex sets representing continuous-layer control and disturbance actions, and
$\mathcal{A}_D$, $\mathcal{A}_A$ finite sets of actions available to the two
agents governing the mode transitions.
A \emph{two-layer hybrid system} is a tuple
   $ \Gamma =\bigl(\mathcal{X},\,\mathcal{I},\,\mathcal{V},\,\mathcal{W},\,\mathcal{F},\,\mathcal{G}\bigr)$,
where:
\begin{enumerate}
\item $\mathcal{X} \subseteq \mathbb{R}^n$ is the continuous state space and $\mathcal{I}$ is the finite mode space;

\item The continuous-layer policies
\begin{equation*}
\nu:[0,T]\times \mathcal{X}\times \mathcal{I}\to \mathbb{U},\qquad
\omega:[0,T]\times \mathcal{X}\times \mathcal{I}\to \mathbb{W}
\end{equation*}
are Borel measurable and $\mathbb{F}$-progressively measurable; the associated admissible policy classes are denoted $\mathcal{V}$ and $\mathcal{W}$. \rev{Throughout, $\nu$ and $\omega$ denote the inner-layer feedback \emph{policies} (maps into the action sets), while $u:=\nu(t,X_t,I_t)\in\mathbb{U}$ and $w:=\omega(t,X_t,I_t)\in\mathbb{W}$ denote the corresponding realized \emph{action values}; the symbols are used consistently in this sense.}

\item The mode-selection policies
\begin{equation*}
f:[0,T]\times \mathcal{X}\times \mathcal{I}\to \Delta(\mathcal{A}_D),\qquad
g:[0,T]\times \mathcal{X}\times \mathcal{I}\to \Delta(\mathcal{A}_A)
\end{equation*}
assign mixed actions from $\mathcal{A}_D$ and $\mathcal{A}_A$. The admissible classes are denoted $\mathcal{F}$ and $\mathcal{G}$.
\end{enumerate}

Given $(f,g)$ and $(\nu,\omega)$, the hybrid state process $(X_t, I_t)_{t \in [0,T]}$ evolves as a \emph{Regime-Switching Jump-Diffusion} defined by:

\begin{itemize}
\item[(a)] Continuous dynamics: Between jumps of the mode $I_t$, the continuous state $X_t$ evolves according to the stochastic differential equation (SDE):
\begin{align}
dX_t &= f_p\bigl(t, X_t, \nu_t, \omega_t; I_t\bigr)dt
+ \sigma\bigl(t, X_t; I_t\bigr)dW_t \nonumber \\
&\quad + \int_{\mathcal{Z}} \rho (t, X_t, z; I_t) \tilde{N}(dt, dz),
\label{eq:cont-flow}
\end{align}
with $X_0 = x_0 \in \mathcal{X}$. Here, $\nu_t = \nu(t, X_t, I_t)$ and $\omega_t = \omega(t, X_t, I_t)$.

\item[(b)] Discrete dynamics: The mode process $I_t \in \mathcal{I}$ is a controlled continuous-time Markov chain with generator matrix $\Pi_t = [\mu_{ij}(t)]_{i,j \in \mathcal{I}}$ modulated by the outer policies:
\begin{equation}
\mathbb{P}\bigl(I_{t+dt} = j \mid I_t = i, X_t = x\bigr) = \mu_{ij}\bigl(f(t,x,i), g(t,x,i)\bigr)dt + o(dt),
\label{eq:mode-kernel}
\end{equation}
where $\mu_{ii} = - \sum_{j \neq i} \mu_{ij}$. The mode-selection policies $(f,g)$ determine the
regime transition intensities $\mu_{ij}: \Delta(\mathcal{A}_D) \times \Delta(\mathcal{A}_A) \to [0, \infty),$
yielding the instantaneous rate $\mu_{ij}\bigl(f(t,X_t,I_t),g(t,X_t,I_t)\bigr)$
at each $t$.
\end{itemize}

Above defines a measure $\mathbb{P}^{\nu,\omega,f,g}$ on the path space $D([0,T]; \mathcal{X} \times \mathcal{I})$.
\end{definition}
The performance of a policy tuple is evaluated through the cost functional:
\begin{equation}
J(f,g;\nu,\omega)
=
\mathbb{E}\!\left[
c_T\bigl(X_T, I_T\bigr)
+
\int_{0}^{T}
c\bigl(t, X_t, \nu_t, \omega_t, I_t\bigr)\,dt
\right].
\label{eq:cost}
\end{equation}

The two-layer decision architecture is expressed as the bi-level optimization
problem
\begin{equation}
    \label{eq:bilevel-problem}
    \rev{\begin{aligned}
         & \min _{f \in \mathcal{F}} \max _{g \in \mathcal{G}} \Phi\left(f, g ; \nu^{*}, \omega^{*}\right) \\
         & \text { s.t. } \quad
         J(f,g;\nu^*,\omega)
         \leq J(f,g;\nu^*,\omega^*)
         \leq J(f,g;\nu,\omega^*) \\
         & \hspace{18mm}\text{for all }(\nu,\omega)\in\mathcal{V}\times\mathcal{W}.
    \end{aligned}}
\end{equation}

where the inner minimax determines the continuous-layer value and the
outer minimax governs mode manipulation. The functional $\Phi\left(\cdot, \cdot ; \nu^{*}, \omega^{*}\right):  \Delta\left(\mathcal{A}_{D}\right) \times \Delta\left(\mathcal{A}_{A}\right) \rightarrow \mathbb{R}$ represents the outer-layer preference and will be specified
later.

\begin{assumption}[Standing assumptions]\label{assum:regularity}
\begin{itemize}
~
\item[(i)] (\emph{Generalized Isaacs Condition})
For each fixed mode $i \in \mathcal{I}$ and any smooth test function $\phi \in C^2(\mathcal{X})$, we define the generalized Hamiltonian $\mathcal{H}_i[\phi]$ acting on the state $x$, gradient $p = \nabla \phi(x)$, Hessian $M = \nabla^2 \phi(x)$, and controls $(u,w)$:
\begin{align*}
\mathcal{H}_i[\phi](t,x,p,M,u,w) &= c(t,x,u,w,i) + p^\top f_p(t,x,u,w; i) \\
&+ \frac{1}{2}\mathrm{Tr}\bigl(\sigma(t,x; i)\sigma(t,x; i)^\top M\bigr) \\
+ \int_{\mathcal{Z}} \Bigl( \phi(x+\rho (t,x,z; i)) & - \phi(x) - p^\top\rho (t,x,z; i) \Bigr) \eta(dz; u,w).
\end{align*}
We assume that the minimax condition holds for all valid inputs:
\begin{equation*}
\min_{u\in\mathbb{U}}\max_{w\in\mathbb{W}}\mathcal{H}_i[\phi](t,x,p,M,u,w) = \max_{w\in\mathbb{W}}\min_{u\in\mathbb{U}}\mathcal{H}_i[\phi](t,x,p,M,u,w).
\end{equation*}

\item[(ii)] The coefficients $f_p, \sigma, \rho $ satisfy the standard Lipschitz and linear growth conditions in $x$, uniformly in $(u,w)$. \rev{Here $\eta(dz;u,w)$ is the \emph{controlled jump-intensity kernel} of the random measure $N$ introduced in Section~\ref{sec:problem}: a finite measure on $\mathcal{Z}$ depending Borel-measurably on the realized actions $(u,w)$, which serves as the predictable compensator $\eta(dz;u_t,w_t)\,dt$ of $N$ under the induced law $\mathbb{P}^{\nu,\omega,f,g}$. The compensated (martingale) measure appearing in \eqref{eq:cont-flow} is accordingly
\[
\tilde N(dt,dz)=N(dt,dz)-\eta(dz;u_t,w_t)\,dt,
\]
and the action-free baseline intensity is $\lambda(dz):=\eta(dz;u^\circ,w^\circ)$; a single controlled kernel $\eta$ thus governs the jump term throughout.} The kernel $\eta(dz; u,w)$ is bounded and satisfies $\int_{\mathcal{Z}}(1\wedge\|z\|^2)\,\eta(dz;u,w)<\infty$ uniformly in $(u,w)$. The costs $c, c_T$ satisfy quadratic growth conditions.

\end{itemize}
\end{assumption}

\begin{remark}[Scope of the Isaacs conditions]\label{rem:isaacs-scope}
\rev{Assumption~\ref{assum:regularity}(i), and its outer-layer counterpart Assumption~\ref{ass:outer-isaacs}, are structural minimax-interchange requirements. They do not hold for arbitrary data, and they are restrictive in the presence of jumps. A standard sufficient condition for the inner layer is that the Hamiltonian $\mathcal{H}_i[\phi]$ be convex in the control $u$ and concave in the disturbance $w$ over the compact convex action sets $\mathbb{U},\mathbb{W}$; the order of $\min_u$ and $\max_w$ may then be interchanged by Sion's minimax theorem. Analogously, the outer-layer condition follows when $\mathcal{H}_{\mathrm{out}}$ is convex in the minimizing mixed action and concave in the maximizing mixed action on the finite simplices. The inner condition holds for the linear--quadratic class of Section~\ref{sec:case} (with $R_i\succ0$ and $S_i\succ0$), while the entropy-enriched outer payoff used in that specialization is strictly convex--concave. The scalar inner optimizations displayed for the exponential--affine application in Section~\ref{sec:application} likewise admit unique optimizers. Accordingly, the general hierarchy of Sections~\ref{sec:problem}--\ref{sec:solution} should be read as a template; outside these structured subclasses, each Isaacs condition must be verified case by case.}
\end{remark}

\begin{lemma}[Existence and Estimates]\label{lem:existence}
Under Assumption~\ref{assum:regularity}, for any admissible policies, the hybrid system \eqref{eq:cont-flow}--\eqref{eq:mode-kernel} admits a unique strong solution $(X_t, I_t)_{t \in [0,T]}$. Furthermore, for any $p \ge 1$, there exists a constant $C_p > 0$ such that:
\begin{equation*}
\mathbb{E}\left[\sup_{t\in[0,T]} \|X_t\|^p\right] \le C_p\bigl(1+\|x_0\|^p\bigr).
\end{equation*}
\end{lemma}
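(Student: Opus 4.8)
The plan is to establish well-posedness by an \emph{interlacing} (pathwise) construction that decouples the discrete mode jumps from the continuous flow, and then to obtain the moment bound by an It\^o--Gronwall argument exploiting the fact that the growth conditions of Assumption~\ref{assum:regularity}(ii) are uniform across the finite mode set $\mathcal{I}$.

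For existence and uniqueness I would first use that the mode process has uniformly bounded jump intensities: since $\mathcal{A}_D,\mathcal{A}_A$ are finite and $\mu_{ij}$ is continuous on the compact simplices $\Delta(\mathcal{A}_D)\times\Delta(\mathcal{A}_A)$, the total exit rate $q(x,i) := -\mu_{ii} = \sum_{j\neq i}\mu_{ij}(f(t,x,i),g(t,x,i))$ is bounded by some $\Lambda < \infty$ uniformly in $(t,x,i)$. I then construct the solution recursively on the switching times $0 = \tau_0 < \tau_1 < \cdots$: given $(\tau_k, X_{\tau_k}, I_{\tau_k})$, freeze the mode at $I_{\tau_k}=i$ and solve the jump-diffusion SDE \eqref{eq:cont-flow} with coefficients $f_p(\cdot,\cdot,\nu,\omega;i)$, $\sigma(\cdot,\cdot;i)$, $\rho(\cdot,\cdot,\cdot;i)$, each of which is Lipschitz with linear growth uniformly in $(u,w)$; hence the frozen-mode SDE admits a unique strong solution on $[\tau_k,T]$ by the classical theory for L\'evy-driven SDEs. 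The next switching time is defined through the state-dependent intensity by $\tau_{k+1} = \inf\{t > \tau_k : \int_{\tau_k}^t q(X_s, i)\,ds \ge \mathcal{E}_{k+1}\}$ with $\mathcal{E}_{k+1}$ an independent unit-mean exponential, and the post-jump mode is drawn from $\mu_{i\cdot}/q$. Because $q \le \Lambda$, the increments $\tau_{k+1}-\tau_k$ stochastically dominate an $\mathrm{Exp}(\Lambda)$ variable, so the number of switches on $[0,T]$ is dominated by a $\mathrm{Poisson}(\Lambda T)$ variable and is a.s.\ finite; the construction terminates after finitely many steps and yields a global solution. Uniqueness follows by pasting the pathwise uniqueness of each segment to the a.s.\ uniqueness of the interlacing.

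For the moment estimate I would apply It\^o's formula for jump-diffusions to $t \mapsto \|X_t\|^p$ (treating $p\ge 2$ first and recovering $1\le p<2$ by Jensen), localized by $\tau_n = \inf\{t : \|X_t\| \ge n\}$ to guarantee a priori integrability. The drift and It\^o-correction terms are controlled by the uniform linear-growth bounds, contributing a term of order $\int_0^t (1 + \|X_s\|^p)\,ds$; note that the continuous state does not jump at mode switches, so no extra contribution arises from the chain. For the Brownian and compensated-Poisson stochastic integrals, after taking the supremum in time I invoke the Burkholder--Davis--Gundy and Kunita inequalities to bound their expected suprema, then absorb the resulting terms via Young's inequality. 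This produces $\mathbb{E}[\sup_{s\le t}\|X_{s\wedge\tau_n}\|^p] \le C(1+\|x_0\|^p) + C\int_0^t \mathbb{E}[\sup_{r\le s}\|X_{r\wedge\tau_n}\|^p]\,ds$, and Gronwall's lemma followed by Fatou's lemma as $n\to\infty$ gives the bound with $C_p = C e^{CT}$.

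I expect two obstacles. The conceptual one is the \emph{state-dependent coupling} of the switching intensity: since $q(X_s,i)$ depends on the continuous path, the inter-jump construction and the continuous SDE are genuinely intertwined, and one must verify that the randomized time-change definition of $\tau_{k+1}$ is consistent and non-explosive---this is precisely where uniform boundedness of the rates is essential. The technical one is the jump integral in the moment bound: controlling $\mathbb{E}[\sup_t \|\int_0^t\!\int_\mathcal{Z}\rho\,\tilde{N}(ds,dz)\|^p]$ requires the jump analogue of BDG, which for $p>2$ yields both a $\lambda$-integral of $\|\rho\|^2$ and of $\|\rho\|^p$; the integrability hypothesis on $\nu(dz;u,w)$ together with the linear-growth bound on $\rho$ is what renders both terms controllable and lets them feed cleanly into Gronwall.
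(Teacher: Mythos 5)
Your proposal follows essentially the same route as the paper's own proof: existence and uniqueness via a segment-wise (interlacing) construction between mode switches, using boundedness of the transition rates to guarantee a.s.\ finitely many switches on $[0,T]$, and the moment bound via It\^o's formula applied to $\|X_t\|^p$, BDG-type inequalities for the martingale terms, and Gr\"onwall's lemma. Your write-up is considerably more detailed than the paper's sketch (explicit exponential time-change construction of the switching times, localization, Jensen's inequality for $1\le p<2$, Kunita's inequality for the jump integral, and Fatou's lemma), but the underlying argument is the same.
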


\begin{proof}
Since the transition rates are bounded, the mode process $I_t$ undergoes finitely many jumps in $[0, T]$ almost surely. Between any two jump times, the system evolves as a standard SDE with jumps. Under the Lipschitz and linear growth conditions (Assumption~\ref{assum:regularity}), a unique strong solution exists for each interval (see e.g., \citep{applebaum2009levy, oksendal2005stochastic}). We construct the global solution $(X_t)_{t\ge0}$ by concatenating these trajectory segments at the jump times of $I_t$.

We apply It\^o's formula to the function $\phi(x) = \|x\|^p$. The linear growth assumption on the drift $f_p$ and jump intensity implies that the generator is bounded by $\mathcal{L}\phi(x) \le C(1 + \|x\|^p)$.
To bound the expectation of the supremum, we handle the martingale terms (diffusion and compensated jumps) using the {Burkholder-Davis-Gundy (BDG) inequalities}, which control the maximum of the stochastic integrals.
Combining these bounds yields an integral inequality for $g(t) = \mathbb{E}[\sup_{s \le t} \|X_s\|^p]$. The final result follows immediately from {Gr\"onwall's inequality}.
\end{proof}

\section{Cross-Layer Viscosity Solution}\label{sec:solution}

Once the hybrid architecture $\Gamma$ in Definition~\ref{def:hybrid} is equipped
with Assumption~\ref{assum:regularity}, the coupled state process
$(X_t, I_t)$ is a Regime-Switching Jump-Diffusion. Between jumps of the discrete mode $I_t$, the
continuous state $X_t$ follows the stochastic evolution generated by the drift $f_p$, diffusion $\sigma$,
and jump measure defined in \eqref{eq:cont-flow}. 
For any probe function
\begin{equation*}
\phi:[0,T]\times \mathcal{I}\times\mathcal{X}\to\mathbb{R},\qquad
\phi(\cdot, i, \cdot) \in C^{1,2}([0,T] \times \mathcal{X}),
\end{equation*}
the infinitesimal generator of $(X_t, I_t)$ under continuous-layer policies
$(\nu,\omega)$ and mode-selection policies $(f,g)$ is given by the sum of the
diffusion generator, the inner jump generator, and the regime-switching operator:
\begin{equation}
\label{eq:generator-full}
\begin{aligned}
(\mathcal{L}^{f,g,\nu,\omega}\phi)(t,x,i)
&=
\frac{\partial \phi}{\partial t}(t,x,i)
+ \nabla_x\phi(t,x,i)^{\top} f_p\bigl(t,x,\nu(t,x,i),\omega(t,x,i);i\bigr) \\
&\quad
+ \frac{1}{2}\mathrm{Tr}\Bigl(\sigma(t,x;i)\sigma(t,x;i)^\top \nabla^2_{xx}\phi(t,x,i)\Bigr) \\
+ \int_{\mathcal{Z}} \Bigl( &\phi(t, x+\rho (t,x,z; i), i) - \phi(t,x,i) - \nabla_x\phi^\top \rho (t,x,z; i) \Bigr) \nonumber \\
&\qquad \eta\bigl(dz;\nu(t,x,i),\omega(t,x,i)\bigr) \\
&\quad
+ \sum_{j\neq i}
    \mu_{ij}\bigl(f(t,x,i),g(t,x,i)\bigr)
    \bigl[\phi(t,x,j)-\phi(t,x,i)\bigr].
\end{aligned}
\end{equation}
\rev{Each term has a direct interpretation. The transport terms $\partial_t\phi$ and $\nabla_x\phi^\top f_p$ propagate $\phi$ along the controlled drift, which the realized inner control $u=\nu(t,x,i)$ and disturbance $w=\omega(t,x,i)$ shape and which is the continuous robust-control channel. The trace term $\tfrac12\mathrm{Tr}(\sigma\sigma^\top\nabla^2_{xx}\phi)$ is the diffusive, second-order contribution (local volatility risk). The compensated integral against $\eta(dz;u,w)$ is the \emph{controlled-jump} generator: it accounts for discontinuous moves of size $\rho$ whose \emph{intensity is itself set by the inner actions}, so the micro-players also govern how often and how large the jumps are; the $-\nabla_x\phi^\top\rho$ term is the compensator that renders the jump part a martingale increment. The final sum $\sum_{j\neq i}\mu_{ij}(f,g)\,[\phi(t,x,j)-\phi(t,x,i)]$ is the \emph{regime-switching} operator: the macro-players $(f,g)$ set the transition rates of the Markov mode process, and each potential switch is priced by the cross-mode value difference $\phi(t,x,j)-\phi(t,x,i)$. The first three terms are thus the arena of the fast inner game and the last term the object of the slow outer game, which is the cross-layer separation that the coupled Hamilton-Jacobi-Isaacs hierarchy makes precise.} Because $(f,g)$ may depend explicitly on time and on the instantaneous
continuous state, the generator~\eqref{eq:generator-full} captures the full
state-time dependence of the mode-switching rates.

\begin{lemma}[Dynkin identity]
\label{lem:dynkin}
Under Assumption~\ref{assum:regularity} and fixed admissible policies
$(f,g,\nu,\omega)$, the process
\begin{equation*}
M_\phi(\tau)
=
\phi\bigl(\tau, X_\tau, I_\tau\bigr)
-
\phi(t, x, i)
-
\int_t^\tau
\bigl(\mathcal{L}^{f,g,\nu,\omega}\phi\bigr)(s, X_s, I_s)\,ds
\end{equation*}
is a local martingale. If $\phi$ and its derivatives are bounded, it is a martingale for any bounded stopping time $\tau\le T$. Consequently,
\begin{equation}
\label{eq:dynkin}
\mathbb{E}\bigl[\phi(\tau, X_\tau, I_\tau)\bigr]
=
\phi(t, x, i)
+
\mathbb{E}
\int_t^\tau
\bigl(\mathcal{L}^{f,g,\nu,\omega}\phi\bigr)(s, X_s, I_s)\,ds.
\end{equation}
\end{lemma}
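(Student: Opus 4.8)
The plan is to read the identity off the generalized It\^o formula for the hybrid process $(X_s,I_s)$: I would show that $\phi(\tau,X_\tau,I_\tau)-\phi(t,x,i)$ equals $\int_t^\tau \mathcal{L}^{f,g,\nu,\omega}\phi\,ds$ plus a sum of stochastic integrals, so that $M_\phi$ is exactly that sum and hence a local martingale; then I would promote it to a genuine martingale under the boundedness hypothesis and conclude by optional sampling. The decomposition of the generator in \eqref{eq:generator-full} into a diffusion part, an inner-jump part, and a regime-switching part is precisely the structure the It\^o formula reproduces once each jump mechanism is compensated.

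First I would apply It\^o's formula to $s\mapsto\phi(s,X_s,I_s)$ interval-by-interval. By the argument in the proof of Lemma~\ref{lem:existence}, the mode process has only finitely many jumps on $[0,T]$ almost surely, so on each maximal interval where $I_s\equiv i$ is frozen, $X_s$ is an ordinary It\^o jump-diffusion and the standard formula gives
\begin{align*}
d\phi(s,X_s,i) &= (\mathcal{L}^{i}_{\mathrm c}\phi)(s,X_s,i)\,ds + \nabla_x\phi^\top\sigma\,dW_s \\
&\quad + \int_{\mathcal{Z}}\bigl(\phi(s,X_{s-}+\rho,i)-\phi(s,X_{s-},i)\bigr)\tilde N(ds,dz),
\end{align*}
where $\mathcal{L}^{i}_{\mathrm c}$ denotes the diffusion-and-inner-jump part of \eqref{eq:generator-full} with the mode frozen at $i$ (its first three lines), and the last two terms are the Brownian and compensated-Poisson martingale increments. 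Summing over the constant-mode intervals and inserting at each mode-jump time the increment $\phi(s,X_s,I_s)-\phi(s,X_s,I_{s-})$, I would then compensate the chain: writing $I$ through its transition counting processes $N_{ij}$ (counting $i\to j$ jumps) with $\mathbb{F}$-intensity $\mu_{ij}(f(s,X_s,i),g(s,X_s,i))$, the jump contributions are $\sum_{j\neq i}(\phi(s,X_s,j)-\phi(s,X_s,i))\,dN_{ij}(s)$, whose compensator $\sum_{j\neq i}\mu_{ij}(\phi(s,X_s,j)-\phi(s,X_s,i))\,ds$ is exactly the regime-switching operator of \eqref{eq:generator-full}. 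After compensation the drift terms telescope into $\int_t^\tau\mathcal{L}^{f,g,\nu,\omega}\phi\,ds$, and what remains is the sum of the Brownian integral, the $\tilde N$-integral, and the compensated-chain integral; thus $M_\phi$ is this sum and is a local martingale.

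To upgrade the local martingale to a true martingale I would verify square-integrability of each integrand using the boundedness of $\phi$ and its derivatives together with the moment bound of Lemma~\ref{lem:existence} and the growth/integrability conditions of Assumption~\ref{assum:regularity}(ii). Boundedness of $\nabla_x\phi$ with the linear growth of $\sigma$ gives $\mathbb{E}\int_t^T\|\nabla_x\phi^\top\sigma\|^2\,ds<\infty$, so the Brownian integral is an $L^2$-martingale; boundedness of $\phi$ and the finiteness of the Lévy kernel make the $\tilde N$-integrand square-integrable against the compensated measure; and boundedness of $\phi$ with bounded transition rates $\mu_{ij}$ makes the chain-jump integrand square-integrable against the compensated counting measure. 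Hence each stochastic integral, and therefore $M_\phi$, is a genuine martingale on $[0,T]$.

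Finally, since $M_\phi(t)=0$, optional sampling at the bounded stopping time $\tau\le T$ yields $\mathbb{E}[M_\phi(\tau)]=0$, which is exactly the Dynkin identity \eqref{eq:dynkin} after rearrangement. The main obstacle I anticipate is the bookkeeping of the two distinct jump mechanisms — the inner Poisson measure $N(dt,dz)$ and the mode-switching counting processes $N_{ij}$ — and in particular producing the correct compensator for the controlled, state-dependent chain so that it matches the regime-switching term of \eqref{eq:generator-full}; the integrability upgrade itself is routine once boundedness and Lemma~\ref{lem:existence} are in hand.
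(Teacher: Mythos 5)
Your proposal is correct and follows essentially the same route as the paper's proof: decompose at the mode-jump times $(T_k)$, apply It\^o's formula for jump-diffusions on each constant-mode interval, compensate the chain transitions so that the compensator reproduces the regime-switching term of \eqref{eq:generator-full}, and then kill the martingale terms in expectation. Your version is somewhat more explicit than the paper's (it constructs the counting processes $N_{ij}$, verifies square-integrability for the martingale upgrade, and invokes optional sampling rather than taking expectations directly), but the underlying argument is the same.
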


\begin{proof}
Let $(T_k)_{k\ge 0}$ denote the jump times of the mode process $I_t$ with $T_0=t$.
On each random interval $[T_k, T_{k+1})$, the mode $I_t=i$ remains constant.
The evolution of $\phi(t, X_t, i)$ is governed by It\^o's formula for semimartingales with jumps \citep[Thm 4.4.7]{applebaum2009levy}:
\begin{equation*}
\phi(T_{k+1}^-, X_{T_{k+1}^-}, i) - \phi(T_k, X_{T_k}, i)
= \int_{T_k}^{T_{k+1}} \mathcal{L}_{inner} \phi \, ds + \mathcal{M}_{k, k+1},
\end{equation*}
where $\mathcal{L}_{inner}$ represents the continuous drift, diffusion, and inner jump parts of the generator, and $\mathcal{M}_{k, k+1}$ collects the stochastic integrals with respect to $dW_t$ and $\tilde{N}(dt, dz)$, which are zero-mean martingales under the boundedness assumptions.

At the jump time $T_{k+1}$, the mode switches from $i$ to $j$ with intensity $\mu_{ij}$. The compensator for this discrete transition is exactly the regime-switching sum in \eqref{eq:generator-full}. Summing these contributions over all intervals up to $\tau$ and taking expectations eliminates the martingale terms, yielding the claimed identity.
\end{proof}

Lemma~\ref{lem:dynkin} connects the stochastic dynamics in
Definition~\ref{def:hybrid} to the value functions developed below. It
justifies infinitesimal expansions of probe functions and supports the
viscosity-solution formulations of the coupled Hamilton-Jacobi-Isaacs (HJI)
equations \citep{pham2009continuous}.

\subsection{Inner-Layer Hamilton-Jacobi-Isaacs Equation}
\label{sec:inner-HJI}

Definition~\ref{def:hybrid} shows that once the mode-selection policies
$(f,g)$ are frozen, the continuous-layer control and disturbance interact
through a zero-sum stochastic differential game. The state evolves as a controlled Jump-Diffusion, coupled with mode-dependent switching intensities.

For fixed $(f,g)$, define the inner-layer value functions by
\begin{equation}
\label{eq:Vi}
\begin{aligned}
V_i(x,t;f,g)
&=
\inf_{\nu\in\mathcal{V}}
\sup_{\omega\in\mathcal{W}}
\mathbb{E}
\bigg[
\int_t^{T}
c\bigl(s, X_s, \nu_s, \omega_s, I_s\bigr) ds 
+
c_T\bigl(X_T, I_T\bigr)
\big|
\mathcal{F}_t
\bigg].
\end{aligned}
\end{equation}

\begin{lemma}[Inner-layer HJI]
\label{lem:inner-HJI}
For fixed mode-selection policies $(f,g)$, the family of inner-layer value
functions $\{V_i(\cdot,\cdot;f,g)\}_{i\in \mathcal{I}}$ is the unique
viscosity solution (with appropriate growth conditions) of the following system of Partial Integro-Differential Equations (PIDE):
\begin{equation}
\label{eq:HJI-inner}
\begin{aligned}
-\partial_t V_i(t,x)
&=
\min_{u\in\mathbb{U}}\max_{w\in\mathbb{W}}
\Bigl\{
c(t,x,u,w,i)
+ \mathcal{L}^{u,w}_i V_i(t,x) \\
&\quad
+ \sum_{j\neq i}
\mu_{ij}\bigl(f(t,x,i),g(t,x,i)\bigr)
\bigl[V_j(t,x)-V_i(t,x)\bigr]
\Bigr\}, \\
V_i(T,x)&=c_T(x,i),
\end{aligned}
\end{equation}
for all $(t,x)\in [0,T]\times \mathcal{X}$.
Here, $\mathcal{L}^{u,w}_i$ is the local integro-differential operator:
\begin{align*}
\mathcal{L}^{u,w}_i \phi(x) &= \nabla_x \phi(x)^\top f_p(t,x,u,w; i) + \frac{1}{2}\mathrm{Tr}\bigl(\sigma(t,x; i)\sigma(t,x; i)^\top \nabla^2_{xx}\phi(x)\bigr) \\
&+ \int_{\mathcal{Z}} \Bigl(\phi(x+\rho (t,x,z; i)) - \phi(x) - \nabla_x\phi(x)^\top \rho (t,x,z; i)\Bigr) \eta(dz;u,w).
\end{align*}
\end{lemma}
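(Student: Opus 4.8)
The plan is to establish the inner-layer HJI system \eqref{eq:HJI-inner} as the characterization of the value functions $\{V_i\}$ in three stages: (i) a \emph{dynamic programming principle} (DPP), (ii) a \emph{verification/viscosity-consistency} argument producing the PIDE, and (iii) a \emph{comparison principle} yielding uniqueness. Throughout, the mode-selection policies $(f,g)$ are frozen, so the switching intensities $\mu_{ij}(f(t,x,i),g(t,x,i))$ are fixed (Borel measurable, bounded) coefficients; the only genuine optimization is the inner minimax over $(\nu,\omega)$. First I would record the DPP: for any stopping time $\tau\in[t,T]$,
\begin{equation*}
V_i(t,x)=\inf_{\nu}\sup_{\omega}\,\mathbb{E}\!\left[\int_t^\tau c(s,X_s,\nu_s,\omega_s,I_s)\,ds+V_{I_\tau}(X_\tau,\tau)\,\Big|\,X_t=x,I_t=i\right],
\end{equation*}
which follows from the flow property of the regime-switching jump-diffusion (guaranteed by Lemma~\ref{lem:existence}) and a measurable-selection argument over the admissible classes $\mathcal{V},\mathcal{W}$. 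Here the value is naturally read along the coupled state $(X,I)$, so $V_{I_\tau}$ automatically encodes the possibility that the mode has switched before $\tau$.

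Second, I would derive the PIDE from the DPP. Fixing a smooth test function $\phi$ touching $V_i$ from above (for the supersolution property) or below (subsolution), I choose $\tau=\tau_h:=(t+h)\wedge\theta$ where $\theta$ is the first exit time of $(s,X_s)$ from a small space-time neighborhood. Applying the Dynkin identity of Lemma~\ref{lem:dynkin} to $\phi$ over $[t,\tau_h]$, the generator \eqref{eq:generator-full} splits exactly into the local integro-differential operator $\mathcal{L}^{u,w}_i$ plus the regime-switching sum $\sum_{j\neq i}\mu_{ij}[\phi(t,x,j)-\phi(t,x,i)]$, reproducing the bracketed expression in \eqref{eq:HJI-inner}. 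Dividing by $h$, sending $h\downarrow 0$, and using the Isaacs condition of Assumption~\ref{assum:regularity}(i) to interchange $\min_u\max_w=\max_w\min_u$, the DPP inequalities translate into the viscosity sub/supersolution inequalities for the stated Hamiltonian. The terminal condition $V_i(T,x)=c_T(x,i)$ is immediate from \eqref{eq:Vi}.

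Third, uniqueness requires a comparison principle for this coupled system of nonlocal PIDEs. I would adapt the Jensen--Ishii lemma / doubling-of-variables technique for second-order integro-differential equations (following the framework in \citep{pham2009continuous}), treating the finite family $\{V_i\}_{i\in\mathcal{I}}$ as a single vector-valued unknown on $\mathcal{X}\times\mathcal{I}\times[0,T]$. The coupling term is the crucial structural feature: because $\mu_{ij}\ge 0$ for $j\neq i$ and $\mu_{ii}=-\sum_{j\neq i}\mu_{ij}$, the switching operator is \emph{quasimonotone} (it is an off-diagonally nonnegative, zero-row-sum $Q$-matrix), so a maximum of $W_i-V_i$ attained at some index $i^\star$ yields a nonpositive contribution from $\sum_{j}\mu_{i^\star j}[(W_j-V_j)-(W_{i^\star}-V_{i^\star})]$; this is exactly what lets the comparison argument close over the mode index. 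The main obstacle is the \emph{nonlocal jump term}: the doubling-of-variables penalization interacts with the integral operator $\int_{\mathcal{Z}}(\phi(x+\rho)-\phi(x)-\nabla\phi^\top\rho)\,\nu(dz)$, and one must control the nonlocal part by splitting $\mathcal{Z}$ into a small-jump region (handled by the $C^2$ second-order correction and the integrability of Assumption~\ref{assum:regularity}(ii)) and a bounded-jump region (handled by the Lipschitz estimate on $\rho$ and boundedness of the kernel $\nu$). Once the comparison principle is in hand, existence of a viscosity solution follows from the value function itself (steps (i)--(ii)) together with Perron's method or a stability argument, and uniqueness is then automatic, completing the proof.
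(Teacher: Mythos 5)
Your overall architecture---DPP, test-function expansion, comparison principle---is the same as the paper's, and your third step (quasimonotone coupling plus small/large-jump splitting in the doubling argument) is a sound, more self-contained version of the uniqueness step that the paper simply delegates to a citation. The genuine problem is in your second step. You propose to apply the Dynkin identity of Lemma~\ref{lem:dynkin} to the test function $\phi$, so that the generator \eqref{eq:generator-full} produces the switching sum $\sum_{j\neq i}\mu_{ij}\bigl[\phi(t,x,j)-\phi(t,x,i)\bigr]$, and you claim this ``reproduces the bracketed expression in \eqref{eq:HJI-inner}.'' It does not: the system's coupling term is $\sum_{j\neq i}\mu_{ij}\bigl[V_j(t,x)-V_i(t,x)\bigr]$, i.e., it involves the actual unknowns at the other modes, not the test function. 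To run your Dynkin argument you must define $\phi$ on all modes, and every choice fails: if you extend $\phi$ smoothly but arbitrarily, the inequality you obtain is stated in terms of $\phi(t,x,j)$ and is not the sub/supersolution inequality for \eqref{eq:HJI-inner}; if you require the extension to touch each $V_j$ from above at the test point (so that the DPP term $\mathbb{E}[V_{I_\tau}(\cdot)]$ can be dominated by $\mathbb{E}[\phi(\cdot,I_\tau)]$), such a smooth majorant need not exist for merely semicontinuous $V_j$; and if you set $\phi(\cdot,\cdot,j):=V_j$ for $j\neq i$, then $\phi$ is no longer $C^{2,1}$ in those components and Lemma~\ref{lem:dynkin} does not apply as stated.

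The paper's proof avoids exactly this trap by splitting the expectation over $[\hat t,\hat t+h]$ according to whether the mode jumps: on the no-jump event (probability $1-O(h)$) It\^o/Dynkin is applied only to the smooth function $\phi(\cdot,\cdot,i)$ at the fixed mode $i$; on the jump event (rate $\mu_{ij}$) the value $V_j$ enters directly through the DPP, with no smoothness required, producing the coupling $\sum_{j\neq i}\mu_{ij}\bigl[V_j-\phi\bigr]$. One then uses $V_i\le\phi$ locally together with $\mu_{ij}\ge 0$---the same quasimonotonicity you invoke only in your uniqueness step---to pass to $\sum_{j\neq i}\mu_{ij}\bigl[V_j-V_i\bigr]$ with the correct inequality direction. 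This probabilistic decomposition (equivalently, the standard viscosity definition for weakly coupled systems, in which only the $i$-th component is replaced by the test function while the off-diagonal terms retain $V_j$) is the missing idea; once it is inserted, the remainder of your proposal goes through.
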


\begin{proof}
We proceed in two steps: first establishing that the value function $V_i$ is a viscosity solution (satisfying the subsolution and supersolution properties), and second invoking a comparison principle for uniqueness.

We only demonstrate the subsolution property as the supersolution argument is symmetric. Let $\phi(t,x) \in C^{1,2}([0,T]\times \mathcal{X})$ be a smooth test function such that $V_i(t,x) - \phi(t,x)$ achieves a local maximum at $(\hat{t}, \hat{x})$ with $V_i(\hat{t}, \hat{x}) = \phi(\hat{t}, \hat{x})$.

From the Dynamic Programming Principle (DPP), for small $h > 0$:
\begin{equation*}
V_i(\hat{t}, \hat{x}) \le \inf_{\nu} \sup_{\omega} \mathbb{E} \left[ \int_{\hat{t}}^{\hat{t}+h} c(s, X_s, \nu_s, \omega_s, i) ds + V_{I_{\hat{t}+h}}(\hat{t}+h, X_{\hat{t}+h}) \right].
\end{equation*}

We decompose the expectation based on whether the mode $I_s$ jumps during $[\hat{t}, \hat{t}+h]$.
 With probability $1 - O(h)$, no jump occurs. In this case, $I_{\hat{t}+h}=i$. Using $V_i \le \phi$ and applying It\^o's formula to $\phi$:
$ \mathbb{E}[\phi(\hat{t}+h, X_{\hat{t}+h}) - \phi(\hat{t}, \hat{x})] = \mathbb{E}\int_{\hat{t}}^{\hat{t}+h} (\partial_t \phi + \mathcal{L}^{u,w}_i \phi) ds.$
With probability rate $\mu_{ij}(f,g)$, the mode switches to $j$. The contribution to the expected value change is dominated by the difference $V_j - V_i \approx V_j - \phi$; the jump contribution is then
\[
     \int_{\hat{t}}^{\hat{t}+h} \sum_{j \ne i} \mu_{ij}(f,g) [V_j(s, X_s) - \phi(s, X_s)] ds + o(h).
\]

Substituting these expansions back into the DPP inequality and using $V_i(\hat{t}, \hat{x}) = \phi(\hat{t}, \hat{x})$ to cancel the zero-order terms:
\begin{equation*}
0 \le \inf_{\nu} \sup_{\omega} \mathbb{E} \left[ \int_{\hat{t}}^{\hat{t}+h} \left( c + \partial_t \phi + \mathcal{L}^{u,w}_i \phi + \sum_{j \ne i} \mu_{ij}(f,g) [V_j - \phi] \right) ds \right] + o(h).
\end{equation*}
Dividing by $h$ and letting $h \downarrow 0$, the mean value theorem applies. Since the inequality holds for all controls, we obtain:
\begin{equation*}
-\partial_t \phi(\hat{t}, \hat{x}) - \inf_{u} \sup_{w} \left\{ c + \mathcal{L}^{u,w}_i \phi \right\} - \sum_{j \ne i} \mu_{ij}(f,g) [V_j(\hat{t},\hat{x}) - V_i(\hat{t},\hat{x})] \le 0.
\end{equation*}
This confirms the viscosity subsolution condition. The supersolution argument is symmetric using a local minimum.

The system \eqref{eq:HJI-inner} is a system of coupled non-linear PIDEs. Under Assumption~\ref{assum:regularity} (Lipschitz coefficients, quadratic growth), the comparison principle for viscosity solutions of such systems holds \citep[Thm 3.4]{barles1997backward}. Specifically, if $U$ is a subsolution and $V$ is a supersolution with $U(T) \le V(T)$, then $U \le V$ on $[0,T]$. Since our value function is both, it is the unique solution.
\end{proof}
The family $\{V_i\}$ encodes the inner-layer response to any fixed choice of
mode-selection strategies $(f,g)$. In particular, $V_i(x,t;f,g)$ can be
regarded as the effective performance index associated with starting in mode
$i$ at state $x$ and time $t$, factoring in the optimal continuous-time response to the induced regime uncertainty.

\subsection{Outer-Layer Hamilton-Jacobi-Isaacs System}
\label{sec:outer-HJI}

We now return to the outer-level problem~\eqref{eq:bilevel-problem}, in which
the two mode-selection agents choose strategies $(f,g)$ that influence the
mode-transition dynamics while anticipating the optimal inner-layer responses
captured by Lemma~\ref{lem:inner-HJI}. We interpret the outer-layer interaction as a \emph{committed} (Stackelberg-type) Markov game: at time $t$, the macro-agents choose Markov (state-feedback) policies $(f,g)$ on $[t,T]$, anticipating that the micro-layer subsequently plays the induced inner saddle-point feedback $(\nu^{\star},\omega^{\star})$ associated with $(f,g)$.

For each fixed pair of mode-selection policies $(f,g)$, the inner-layer value
functions $V_i(\cdot,\cdot;f,g)$ are determined by the system~\eqref{eq:HJI-inner}.
Let the optimal inner feedback policies be denoted by:
\begin{equation*}
(u^{*,f,g}, w^{*,f,g})(t,x,i) \in \arg \min_{u\in\mathbb{U}}\max_{w\in\mathbb{W}} 
\mathcal{H}_i[V_i](t, x, \nabla_x V_i, \nabla^2 V_i, u, w)
\end{equation*}
Define the corresponding coefficients:
$
\bar f_p(t,x,i; f,g) := f_p\bigl(t,x, u^{*,f,g}, w^{*,f,g}; i\bigr)$,
$\bar \sigma(t,x,i; f,g) := \sigma\bigl(t,x; i\bigr)$, $\bar \eta(dz; t,x,i, f,g) := \eta(dz; u^{*,f,g}, w^{*,f,g}).$
Under $(f,g)$, the hybrid state $(X_t, I_t)$ therefore evolves as a Jump-Diffusion driven by these effective coefficients, coupled with the regime transition rates $\mu_{ij}(f,g)$.

We model the outer-layer objective as a path integral whose running cost
depends on the inner-layer value functions. Let
$\varphi : [0,T]\times\mathcal{X}\times \mathcal{I} \times \mathbb{R} \to \mathbb{R}$
denote a bounded, continuous outer-layer running cost, where the final
argument will be instantiated with the scalar quantity
$V_i(t,x;f,g)$.

For an initial condition $(t,x,i)$ and mode-selection policies $(f,g)$, define
the outer-layer performance functional
\begin{equation}
\label{eq:outer-cost-functional}
\begin{aligned}
J_{\mathrm{out}}(t,x,i;f,g)
:=
\mathbb{E}^{t,x,i}_{f,g}
\Bigl[
  \int_t^{T}
    \varphi\bigl(s, X_s, I_s,
    V_{I_s}(s, X_s; f,g)\bigr) ds
  + c_T\bigl(X_T, I_T\bigr)
\Bigr],
\end{aligned}
\end{equation}
where $\mathbb{E}^{t,x,i}_{f,g}$ denotes expectation with respect to the
probability law induced by the closed-loop Jump-Diffusion dynamics and the transition rates $\mu_{ij}(f,g)$.

The outer-layer value functions are then
\begin{equation}
\label{eq:outer-value-function}
U_i(t,x)
:=
\inf_{f\in\mathcal{F}}\sup_{g\in\mathcal{G}}
J_{\mathrm{out}}(t,x,i;f,g).
\end{equation}

\subsubsection*{Outer-layer Isaacs condition and HJI system}
We impose an Isaacs condition at the outer layer.

\begin{assumption}[Outer-layer Isaacs condition]
\label{ass:outer-isaacs}
For each tuple $(t,x,i)$ and test function $\psi$, consider the outer-layer Hamiltonian
\begin{align}
\label{eq:outer-hamiltonian}
\mathcal{H}_{\mathrm{out}}[\psi](t,x,i,\alpha,\beta)
:= &
\varphi\bigl(t,x,i,V_i^{\alpha,\beta}(t,x)\bigr)
+ \mathcal{L}^{\alpha,\beta}_{eff} \psi(x) \nonumber \\
&+ \sum_{j\neq i}\mu_{ij}(\alpha,\beta)\bigl[\psi(t,x,j) - \psi(t,x,i)\bigr],
\end{align}
where $\mathcal{L}^{\alpha,\beta}_{eff}$, similar to what is defined in \eqref{eq:generator-full}, is the generator associated with the closed-loop coefficients $\bar f_p, \bar \sigma, \bar \eta$ (evaluated at mixed actions $\alpha, \beta$). We assume that the minimax condition holds:
\begin{equation}
\label{eq:outer-isaacs}
\inf_{\alpha\in\Delta(\mathcal{A}_D)}
  \sup_{\beta\in\Delta(\mathcal{A}_A)}
  \mathcal{H}_{\mathrm{out}} =
\sup_{\beta\in\Delta(\mathcal{A}_A)}
  \inf_{\alpha\in\Delta(\mathcal{A}_D)}
  \mathcal{H}_{\mathrm{out}}.
\end{equation}
\end{assumption}

Under Assumption~\ref{ass:outer-isaacs} and the regularity inherited from
Assumption~\ref{assum:regularity}, the outer-layer value functions satisfy the
following system of HJI equations.

\begin{lemma}[Outer-layer HJI]
\label{lem:outer-HJI}
For each $i\in \mathcal{I}$, the outer-layer value function $U_i$ is the unique
viscosity solution of
\begin{equation}
\label{eq:HJI-outer}
\begin{aligned}
-\partial_t U_i(t,x)
&=
\inf_{\alpha\in\Delta(\mathcal{A}_D)}
 \sup_{\beta\in\Delta(\mathcal{A}_A)}
\Bigl\{
\varphi\bigl(t,x,i,V_i^{\alpha,\beta}(t,x)\bigr) \\
&
+ \mathcal{L}^{\alpha,\beta}_{eff} U_i(t,x)
+ \sum_{j\neq i}
\mu_{ij}(\alpha,\beta)
\bigl[U_j(t,x) - U_i(t,x)\bigr]
\Bigr\}, \\
U_i(T,x)
&=
c_T(x,i),
\end{aligned}
\end{equation}
for all $(t,x)\in[0,T]\times\mathcal{X}$.
\end{lemma}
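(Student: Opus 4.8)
The plan is to transcribe the dynamic-programming/viscosity argument of Lemma~\ref{lem:inner-HJI} to the outer layer, now regarding the mixed mode-actions $(\alpha,\beta)\in\Delta(\mathcal{A}_D)\times\Delta(\mathcal{A}_A)$ as the controls and the composite map $(s,x,i)\mapsto\varphi(s,x,i,V_i^{\alpha,\beta}(s,x))$ as the running cost. Since Lemma~\ref{lem:inner-HJI} produces each inner value $V_i(\cdot,\cdot;f,g)$ as a determined function, the outer problem \eqref{eq:outer-cost-functional}--\eqref{eq:outer-value-function} is a genuine zero-sum stochastic differential game for the jump-diffusion $(X_t,I_t)$ with effective coefficients $\bar f_p,\bar\sigma,\bar\nu$ and controlled rates $\mu_{ij}(\alpha,\beta)$. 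The argument then proceeds in three steps: (i) a dynamic programming principle (DPP) for $U_i$; (ii) verification of the viscosity sub- and supersolution inequalities via the DPP and the Dynkin identity of Lemma~\ref{lem:dynkin}; and (iii) a comparison principle for uniqueness.

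For step (i) I would record the local DPP
\[
U_i(t,x)=\inf_{f}\sup_{g}\,\mathbb{E}^{t,x,i}_{f,g}\Bigl[\int_t^{t+h}\varphi\bigl(s,X_s,I_s,V_{I_s}(s,X_s;f,g)\bigr)\,ds+U_{I_{t+h}}(t+h,X_{t+h})\Bigr],
\]
which follows from the flow property of the closed-loop dynamics under the boundedness and measurable-selection hypotheses exactly as in the inner case. For step (ii), fix $\psi$ with $U_i-\psi$ attaining a local maximum at $(\hat t,\hat x)$. Splitting the expectation by whether the mode jumps on $[\hat t,\hat t+h]$ (no jump with probability $1-O(h)$; a switch $i\to j$ at rate $\mu_{ij}(\alpha,\beta)$), applying It\^o's formula to $\psi$ along the effective jump-diffusion, dividing by $h$, and letting $h\downarrow0$ yields, at $(\hat t,\hat x)$,
\[
-\partial_t\psi\le\inf_{\alpha}\sup_{\beta}\Bigl\{\varphi\bigl(\hat t,\hat x,i,V_i^{\alpha,\beta}\bigr)+\mathcal{L}^{\alpha,\beta}_{eff}\psi+\sum_{j\ne i}\mu_{ij}(\alpha,\beta)\bigl[U_j-U_i\bigr]\Bigr\},
\]
which is the subsolution form of \eqref{eq:HJI-outer}; the supersolution inequality is symmetric at a local minimum. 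Assumption~\ref{ass:outer-isaacs} ensures the order of the inf and sup is immaterial, so both inequalities are posed against the same operator.

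For step (iii), the effective coefficients $\bar f_p,\bar\sigma,\bar\nu$ inherit the Lipschitz and linear-growth bounds of Assumption~\ref{assum:regularity} because the inner saddle feedback $(u^{*,f,g},w^{*,f,g})$ takes values in the compact sets $U,W$; together with boundedness of $\varphi$ and of the rates $\mu_{ij}$, the comparison principle for coupled PIDE systems \citep[Thm 3.4]{barles1997backward} applies and, with the common terminal data $U_i(T,\cdot)=c_T(\cdot,i)$, yields uniqueness.

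The hard part will be the coupling of the running cost to the inner value $V_i^{\alpha,\beta}$. I must justify that the locally relevant inner value entering the DPP expansion is the one indexed by the \emph{instantaneous} mixed action $(\alpha,\beta)=(f(\hat t,\hat x,i),g(\hat t,\hat x,i))$ rather than the globally policy-dependent $V_{I_s}(s,X_s;f,g)$. This is exactly where the committed Stackelberg / no-circular-dependency structure enters: under Markov feedback the inner value is a measurable function of the local action, and continuity of $\varphi$ in its last slot combined with continuity of $(\alpha,\beta)\mapsto V_i^{\alpha,\beta}(\hat t,\hat x)$ forces the frozen running cost to converge to $\varphi(\hat t,\hat x,i,V_i^{\alpha,\beta})$ as $h\downarrow0$. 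Establishing this continuity/measurability of the inner-value map---and hence the regularity of the effective coefficients through $(u^{*,f,g},w^{*,f,g})$---is the delicate point; once secured, the remaining DPP and comparison steps are routine transcriptions of the inner-layer proof.
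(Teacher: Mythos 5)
Your proposal is correct and follows essentially the same route as the paper: the paper's proof likewise treats the outer problem as a generalized Bolza problem for the closed-loop regime-switching jump-diffusion with effective coefficients $\bar f_p,\bar\sigma,\bar\nu$ and repeats the Dynkin/DPP viscosity expansion and comparison argument of Lemma~\ref{lem:inner-HJI}. The ``delicate point'' you flag at the end---continuity of the inner-value/feedback maps $(u^{*,f,g},w^{*,f,g})$ and hence regularity of the effective coefficients (which compactness of $U,W$ alone does not give)---is exactly the issue the paper resolves by invoking strict convex-concavity of the inner Hamiltonian to get continuity of the saddle maps in $\nabla_x V_i$, so your identification of where the work lies matches the paper's.
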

The proof follows the same argument as Lemma~\ref{lem:inner-HJI}. We observe that the outer optimization problem~\eqref{eq:outer-value-function} is a generalized Bolza problem for a Regime-Switching Jump-Diffusion, where the drift, diffusion, and jump measure are determined by the closed-loop coefficients $\bar f_p, \bar \sigma, \bar \eta$.

\rev{The only delicate point is that the effective outer coefficients $(\bar f_p,\bar\sigma,\bar\eta)$ are built from the inner saddle feedback $(u^{*,f,g},w^{*,f,g})$, which depends on the inner gradient $\nabla_x V_i$; since $V_i$ is in general only a viscosity solution, this dependence must be made precise, and we do so at two levels of generality. \emph{(i) Structured classes.} When the inner value is classical, $V_i(\cdot,\cdot;f,g)\in C^{1,2}$ with derivatives bounded on compacts uniformly in $(f,g)$, strict convex--concavity (Remark~\ref{rem:isaacs-scope}) makes the inner saddle single-valued, and the maximum theorem together with the implicit-function representation of the first-order conditions makes $(u^{*,f,g},w^{*,f,g})$ locally Lipschitz in $x$ and continuous in $(f,g)$ \citep{BardiCapuzzo2008}. The effective coefficients are then continuous in $(t,x,\alpha,\beta)$ with linear growth in $x$ and the running source $\varphi(t,x,i,V_i^{\alpha,\beta})$ is continuous, so the dynamic programming principle applies and $U_i$ is the unique viscosity solution of \eqref{eq:HJI-outer} by the comparison principle for coupled integro-PDE systems \citep[Thm.~3.4]{barles1997backward}, exactly as in Lemma~\ref{lem:inner-HJI}. This case covers the linear--quadratic family of Section~\ref{sec:case}, where $V_i=\tfrac12 x^\top P_i x+r_i$ with $P_i$ from Proposition~\ref{prop:riccati-wellposed} so that $\nabla_x V_i=P_i x$ is affine, and the exponential--affine family of Section~\ref{sec:application}. \emph{(ii) General template.} For general convex--concave data $V_i$ is only a viscosity solution, so $\nabla_x V_i$ exists merely Lebesgue-almost everywhere; if $V_i$ is moreover locally semiconcave, then $\nabla_x V_i$ is defined off a Lebesgue-null set and is of locally bounded variation. The inner saddle correspondence $(t,x,p)\mapsto\arg\min_{u}\max_{w}\mathcal H_i$ is nonempty, compact-valued and upper hemicontinuous in $p$, hence admits a Borel-measurable selection \citep{FlemingSoner1993}. This supplies measurable closed-loop coefficients, but measurable selection alone is not a comparison theorem. For the general template we therefore additionally impose the continuity and growth conditions on the selected effective coefficients required by the dynamic programming principle and the comparison principle for the coupled integro-PDE system. Under these standing hypotheses $U_i$ is the unique viscosity solution of \eqref{eq:HJI-outer}. These conditions hold automatically in the structured classes above, which is where every closed-form result of this paper is derived.}
Thus the outer-layer HJI system~\eqref{eq:HJI-outer} mirrors the
``generator-plus-minimax'' structure of the inner-layer HJI
\eqref{eq:HJI-inner}, but with effective dynamics that incorporate the optimal inner-layer response.

Having established the necessary conditions for the inner layer (Lemma \ref{lem:inner-HJI}) and the outer layer (Lemma \ref{lem:outer-HJI}) independently, we now characterize the solution to the full bi-level problem \eqref{eq:bilevel-problem}.

\begin{proposition}[Verification of a feedback Stackelberg profile]
\label{thm:stackelberg-equilibrium}
\rev{Let the inner (follower) and outer (leader) layers have admissible feedback-policy spaces $\mathcal{V}\times\mathcal{W}$ and $\mathcal{F}\times\mathcal{G}$, and adopt the Isaacs conditions (Assumptions~\ref{assum:regularity}(i) and~\ref{ass:outer-isaacs}) and the regularity of Lemma~\ref{lem:outer-HJI}. Suppose there exist value families $\{V_i(\cdot,\cdot;f,g)\}$ and $\{U_i\}$ solving the inner and outer HJI systems \eqref{eq:HJI-inner}--\eqref{eq:HJI-outer}, together with measurable saddle selectors attaining the corresponding Hamiltonians. Then the induced feedback profile is a feedback Stackelberg equilibrium of \eqref{eq:bilevel-problem}, with the following structure.
\begin{enumerate}
\item[(i)] \emph{Follower best response.} For each fixed $(f,g)$, the inner policy $(\nu^*_{f,g},\omega^*_{f,g})$ attaining the minimax of $\mathcal{H}_i$ in \eqref{eq:HJI-inner} achieves the inner value $V_i(\cdot,\cdot;f,g)$ and admits no improving deviation; it is single-valued under the strict convex--concavity of Remark~\ref{rem:isaacs-scope}.
\item[(ii)] \emph{Leader optimum.} Let $(f^*,g^*)$ attain the saddle of the outer Hamiltonian in \eqref{eq:HJI-outer}, into which the follower's value enters through the best-response map $(f,g)\mapsto V^{f,g}$ of (i). Then $(f^*,g^*)$ is the leader's optimum against that map, and $\bigl((\nu^*_{f^*,g^*},\omega^*_{f^*,g^*}),(f^*,g^*)\bigr)$ is a feedback Stackelberg equilibrium: the leader commits and the follower best-responds, the information structure that distinguishes this from a simultaneous-move Nash equilibrium.
\item[(iii)] \emph{Time consistency.} Since the feedback policies are obtained from the HJI equations at every continuation state $(t,x,i)$, the profile is time-consistent on every continuation problem.
\item[(iv)] \emph{Uniqueness.} The value families are the unique viscosity solutions of \eqref{eq:HJI-inner}--\eqref{eq:HJI-outer} (Lemmas~\ref{lem:inner-HJI} and~\ref{lem:outer-HJI}); the follower's response is unique by (i); and the leader's equilibrium generator is unique when the outer saddle is single-valued, as in the entropy-regularized LQ specialization (Corollary~\ref{cor:entropy-coupled-wellposed}) or under the rate-identifiability condition of Remark~\ref{rem:uniqueness}.
\end{enumerate}
Feeding a selected profile into the hybrid dynamics \eqref{eq:cont-flow}--\eqref{eq:mode-kernel} yields a closed-loop regime-switching diffusion, unique under (iv); this sample-path realization is the equilibrium of the hybrid system, of which the value families are the dual description.}
\end{proposition}

\begin{proof}[Proof sketch]
Fix $(f^*,g^*)$. By the dynamic programming principle for the inner layer and the
verification theorem for Isaacs equations, if $V$ is a (sufficiently regular)
solution of \eqref{eq:HJI-inner} and the realized actions $(u^*,w^*)$ induced by $(\nu^*,\omega^*)$ attain the saddle condition
\eqref{eq:HJI-inner}, then the associated controlled state-regime process
achieves the inner game value and no admissible deviation of $(\nu,\omega)$ can improve
the follower's objective; hence $(\nu^*,\omega^*)$ is a best response to $(f^*,g^*)$.
Next, treat the resulting inner value field $V$ as the induced continuation payoff
entering the outer running cost. By the dynamic programming principle for the
outer layer and the corresponding verification theorem, if $U$ solves
\eqref{eq:HJI-outer} and $(f^*,g^*)$ attains \eqref{eq:HJI-outer}, then no
admissible deviation of $(f,g)$ can improve the leader's objective given the
follower's best-response mapping encoded by $V$; hence $(f^*,g^*)$ is optimal at
the outer layer.
Because both layers are characterized by HJI equations posed on
$[0,T]\times\mathbb R^n\times\mathcal I$ and the equilibrium strategies are
feedback (Markov) and obtained from pointwise saddle conditions, the resulting
policy is time-consistent on every continuation problem starting at any $(t,x,i)$.
\end{proof}

\section{Case Study: Mode-Controlled Markov Jump Linear System}\label{sec:case}

We now examine the important case in which the inner-layer differential game
admits closed-form solutions. We focus on the \emph{Linear-Quadratic-Gaussian (LQG)} setting without stochastic inner-layer jump, which yields a coupled family of matrix Riccati equations.

Fix a mode $i\in \mathcal{I}$, and assume the continuous state $X_t \in \mathbb{R}^{n}$ evolves as a linear SDE controlled by affine drifts, with the control and disturbance action space being $\mathbb{U} \subseteq \mathbb{R}^{d_1 },  \mathbb{W} \subseteq \mathbb{R}^{d_2}$:
\begin{equation*}
dX_t
=
(A_i X_t + B_i u + D_i w)dt + \Sigma_i dW_t,
\end{equation*}
where $A_i, B_i,$ and $D_i$ are system matrices of proper dimensions, $\Sigma_i \Sigma_i^\top \succ 0$ captures the regime-dependent volatility. 
Throughout this section, we assume that the running and terminal costs are quadratic:
$c(t,x,u,w,i)
=
\tfrac{1}{2}\bigl(x^\top Q_i x
+ u^\top R_i u
- w^\top S_i w\bigr),$ $
c_T(x,i)
=
\tfrac{1}{2}x^\top Q_{T,i} x,$
with $Q_i \in \mathbb{R}^{n \times n }$, $R_i \in \mathbb{R}^{d_1 \times d_1}$, and $S_i \in \mathbb{R}^{d_2 \times d_2}$.
Under the conditions that $ Q_i\succeq 0,  R_i \succ 0$, and $  S_i \succ 0 $, the generalized Isaacs condition (Assumption~\ref{assum:regularity}) holds. Although the diffusion adds a constant trace term to the Hamiltonian, the saddle-point feedback $(u_i^*,w_i^*)$ remains unique and affine in the gradient $\nabla_x V_i$ (with value function $V_i(t,x) = \tfrac{1}{2} x^{\top} P_i(t) x + r_i(t)$, so that $\nabla_x V_i = P_i(t)x$):
\begin{equation}\label{eq:inner-lqg-sol}
\begin{aligned}
u_i^*(t,x)
=
- R_i^{-1} B_i^\top \nabla_x V_i(x,t;f,g), \ \ 
w_i^*(t,x)
=
S_i^{-1} D_i^\top \nabla_x V_i(x,t;f,g).
\end{aligned}
\end{equation}
Substituting these into the inner-layer HJI~\eqref{eq:HJI-inner} yields the Stochastic LQ-specialized HJI:
\begin{equation}
\label{eq:HJI-LQ-inner}
\begin{aligned}
-\partial_t V_i
&=
\tfrac{1}{2}x^\top Q_i x
- \tfrac{1}{2}\nabla_x V_i^\top B_i R_i^{-1} B_i^\top \nabla_x V_i
+ \tfrac{1}{2}\nabla_x V_i^\top D_i S_i^{-1} D_i^\top \nabla_x V_i
 \\ & \quad + \nabla_x V_i^\top A_i x 
+ \frac{1}{2}\mathrm{Tr}(\Sigma_i \Sigma_i^\top \nabla^2_{xx} V_i)
+ \sum_{j\neq i} \mu_{ij}(f,g)
    \bigl[V_j-V_i\bigr].
\end{aligned}
\end{equation}

We adopt the quadratic ansatz $V_i(t,x) = \tfrac{1}{2}x^\top P_i(t) x + r_i(t)$, so that $\nabla_x V_i = P_i(t)x$ and $\nabla^2_{xx} V_i = P_i(t)$; the resulting trace term decouples from the quadratic optimization. Consequently, the quadratic weight matrices $\{P_i\}_{i\in \mathcal{I}}$ satisfy the \emph{Coupled Riccati Differential Equations}:
\begin{equation}
\label{eq:coupled-Riccati}
-\dot P_i
=
Q_i
+ A_i^\top P_i + P_i A_i
- P_i \Sigma^{ctrl}_i P_i
+ \sum_{j\neq i} \mu_{ij}(f,g)\bigl(P_j - P_i\bigr),
\end{equation}
with boundary conditions $P_i(T) = Q_{T,i}$
for $i\in \mathcal{I}$, where the control matrices $\Sigma^{ctrl}_i = \left( B_i R_i^{-1} B_i^\top - 
 D_i S_i^{-1} D_i^\top \right)$. \rev{The stochastic noise affects only the scalar offset, which satisfies} $ - \dot{r}_i(t) = \tfrac{1}{2}\mathrm{Tr}(\Sigma_i \Sigma_i^{\top} P_i(t)) + \sum_{j \neq i} \mu_{ij} (f, g) (r_j (t) - r_i(t))$, with $r_i(T) = 0$. \rev{Acting on the stacked tuple $\mathbf{P}=(P_i)_{i\in\mathcal{I}}$, we define the per-mode \emph{Riccati operator}, the \emph{Metzler (switching) operator}, and the \emph{trace operator} by
\begin{equation*}
\begin{aligned}
\mathcal{R}(\mathbf{P})_i &:= Q_i + A_i^\top P_i + P_i A_i - P_i \Sigma^{ctrl}_i P_i,\\
(\mathcal{M}\mathbf{P})_i &:= \sum_{j\neq i}\mu_{ij}(f,g)\bigl(P_j-P_i\bigr),\qquad
\mathcal{T}(\mathbf{P})_i := \tfrac{1}{2}\mathrm{Tr}\bigl(\Sigma_i\Sigma_i^{\top} P_i\bigr),
\end{aligned}
\end{equation*}
respectively (the same $\mathcal{M}$ acts entrywise on the stacked scalar tuple $\mathbf{r}$).} With these, the coupled Riccati flow can be written as:
\begin{equation*}
\begin{aligned}
     - \dot{ \mathbf{P} }  =  \mathcal{R}( \mathbf{P} ) + \mathcal{M} \mathbf{P},  & \quad \mathbf{P}(T) = \mathbf{Q}_T \\
     -\dot{\mathbf{r}} = \mathcal{T}(\mathbf{P}) + \mathcal{M} \mathbf{r}, & \quad \mathbf{r}(T) = 0
\end{aligned}
\end{equation*}
where $\mathbf{P}$ and $\mathbf{r}$ collect the value matrices and scalar offsets across all modes.

\rev{For the outer players, let $\mathcal{A}_D$ and $\mathcal{A}_A$ be finite action sets for the minimizing stabilizer and maximizing attacker, respectively. In each regime $i$, the mixed actions are $f_i\in\Delta(\mathcal{A}_D)$ and $g_i\in\Delta(\mathcal{A}_A)$. Each action pair induces a perturbation of the regime-transition generator, collected in matrices $\Lambda_{ij}$, so that
\begin{equation}\label{eq:bilinear-rates}
\mu_{ij}(f_i,g_i)=\bar{\mu}_{ij}+f_i^\top\Lambda_{ij}g_i,\qquad j\neq i.
\end{equation}
Here $\bar{\mu}_{ij}$ is the nominal baseline rate. We assume throughout this section that $\mu_{ij}(f_i,g_i)\geq0$ for all admissible mixed actions and set $\mu_{ii}=-\sum_{j\neq i}\mu_{ij}$.}
 
\subsection{Hierarchical Solution and Outer-Layer Structure}
\label{subsec:outer-structure}

\rev{To render the hierarchical game tractable, we project the matrix-valued inner risk into a scalar outer running cost. For this LQ specialization only, we include a local entropy penalty on the mixed macro actions:
\begin{equation}\label{eq:outer-regularized-cost}
\begin{aligned}
\varphi_i^\tau(P_i,f_i,g_i)
:={}&
\mathrm{Tr}(P_i)
+
\tau_D\sum_{a\in\mathcal{A}_D}f_{i,a}\log f_{i,a}
\\[-2pt]
&
-
\tau_A\sum_{b\in\mathcal{A}_A}g_{i,b}\log g_{i,b}.
\end{aligned}
\end{equation}
where $0\log0:=0$ and $\tau_D,\tau_A>0$. The trace records the inner risk exposure in regime $i$, while the additional terms penalize concentrated macro interventions. This is a local refinement of the outer payoff in Section~\ref{sec:outer-HJI}; the state dynamics and the bilinear rate parameterization are unchanged. With the state-independent ansatz $U_i(t,x)=k_i(t)$, the outer HJI becomes
\begin{equation}\label{eq:outer-regularized-hji}
\begin{aligned}
-\dot{k}_i(t)
={}&
\mathrm{Tr}(P_i(t))
+
\min_{f_i\in\Delta(\mathcal{A}_D)}
\max_{g_i\in\Delta(\mathcal{A}_A)}
\Bigg\{
\sum_{j\neq i}
\Bigl(\bar{\mu}_{ij}+f_i^\top\Lambda_{ij}g_i\Bigr)
\bigl(k_j(t)-k_i(t)\bigr)
\\[-2pt]
&\hspace{44mm}
+
\tau_D\sum_a f_{i,a}\log f_{i,a}
-
\tau_A\sum_b g_{i,b}\log g_{i,b}
\Bigg\}.
\end{aligned}
\end{equation}

with $k_i(T)=0$. At any fixed $\mathbf{k}$, write $\mathbf{M}_i(\mathbf{k}):=\sum_{j\neq i}\Lambda_{ij}(k_j-k_i)$. The pointwise objective in \eqref{eq:outer-regularized-hji} is strictly convex in $f_i$ and strictly concave in $g_i$. Hence it has a unique interior saddle $(f_{\tau,i}^*(\mathbf{k}),g_{\tau,i}^*(\mathbf{k}))$. The first-order conditions give the coupled logit equations
\begin{align}
f_{\tau,i,a}^*(\mathbf{k})
&=
\frac{\exp\!\left(-[\mathbf{M}_i(\mathbf{k})g_{\tau,i}^*(\mathbf{k})]_a/\tau_D\right)}
{\sum_c\exp\!\left(-[\mathbf{M}_i(\mathbf{k})g_{\tau,i}^*(\mathbf{k})]_c/\tau_D\right)},
\label{eq:entropy-logit-f}\\
g_{\tau,i,b}^*(\mathbf{k})
&=
\frac{\exp\!\left([f_{\tau,i}^*(\mathbf{k})^\top\mathbf{M}_i(\mathbf{k})]_b/\tau_A\right)}
{\sum_d\exp\!\left([f_{\tau,i}^*(\mathbf{k})^\top\mathbf{M}_i(\mathbf{k})]_d/\tau_A\right)}.
\label{eq:entropy-logit-g}
\end{align}
The entropy Hessians are positive definite on the simplex tangent spaces, while the bilinear cross terms cancel in the saddle first-order operator. Since $\mathbf{M}_i(\mathbf{k})$ is linear in $\mathbf{k}$, the implicit-function theorem makes the saddle maps smooth in $\mathbf{k}$. Consequently,
\begin{equation}\label{eq:entropy-selected-rates}
\mu_{ij}^{*,\tau}(\mathbf{k})
:=
\bar{\mu}_{ij}
+
f_{\tau,i}^*(\mathbf{k})^\top\Lambda_{ij}g_{\tau,i}^*(\mathbf{k}),
\qquad j\neq i,
\end{equation}
defines the single-valued, locally Lipschitz equilibrium generator
$\Pi_\tau^*(\mathbf{k})$. Thus entropy is part of the outer running cost itself, not an ex-post tie-breaking rule.}

\rev{The inner and outer layers remain genuinely coupled: the selected rates $\mu^{*,\tau}(\mathbf{k})$ enter the inner Riccati flow \eqref{eq:coupled-Riccati}, while the inner value, through $\mathrm{Tr}(P_i)$, drives the outer HJI \eqref{eq:outer-regularized-hji}. The trace projection and the ansatz $U_i=k_i(t)$ do not remove this coupling; they only reduce the instantaneous macro interaction to a finite regularized matrix game. The coupling survives across time through a common backward system for $(\mathbf{P},\mathbf{k})$.}

\begin{proposition}[Entropy-regularized Game--Metzler reduction]
\label{prop:game-metzler-reduction}
\rev{Consider the hierarchical LQG case study with the bilinear switching-rate model \eqref{eq:bilinear-rates}, the regularized outer cost \eqref{eq:outer-regularized-cost}, and the inner Isaacs regularity conditions. The mode-only equilibrium is characterized by the common backward system consisting of the outer offset equations \eqref{eq:outer-regularized-hji} and the coupled Riccati equations \eqref{eq:coupled-Riccati}, with the rates selected by \eqref{eq:entropy-selected-rates}. The inner value is $V_i(t,x)=\tfrac{1}{2}x^\top P_i(t)x+r_i(t)$ and the inner saddle feedback is given by \eqref{eq:inner-lqg-sol}.}
\end{proposition}

\begin{proof}[Proof sketch]
\rev{For the outer layer, insert the mode-only ansatz $U_i(t,x)=k_i(t)$ and the running cost \eqref{eq:outer-regularized-cost} into the outer HJI. The state derivatives vanish and the switching term is
\[
\sum_{j\neq i}
\bigl(\bar{\mu}_{ij}+f_i^\top\Lambda_{ij}g_i\bigr)
\bigl(k_j-k_i\bigr),
\]
which gives \eqref{eq:outer-regularized-hji}. For each fixed continuation-value vector $\mathbf{k}$, the entropy terms make this row-wise objective strictly convex in $f_i$ and strictly concave in $g_i$. Hence the pointwise saddle is unique, and its first-order conditions are the coupled logit equations \eqref{eq:entropy-logit-f}--\eqref{eq:entropy-logit-g}. Substitution yields the selected rates \eqref{eq:entropy-selected-rates}.

Given this selected rate path, the inner layer is a Markov-jump LQG Isaacs problem. Substituting the quadratic ansatz $V_i(t,x)=\tfrac{1}{2}x^\top P_i(t)x+r_i(t)$ into the inner HJI and completing the square in the control and disturbance gives the saddle feedback \eqref{eq:inner-lqg-sol}. Matching the quadratic terms in $x$ yields the coupled Riccati system \eqref{eq:coupled-Riccati}; the remaining scalar terms determine $r_i$. Thus the inner and outer equations close as the stated common backward system.}
\end{proof}

\begin{proposition}[Symmetry and well-posedness of the coupled game Riccati]\label{prop:riccati-wellposed}
\rev{Let $Q_i\succeq0$, $R_i\succ0$, $S_i\succ0$, $Q_{T,i}\succeq0$ for each $i\in\mathcal{I}$, and let the equilibrium rates $\mu_{ij}^*(t)\ge0$ $(j\neq i)$ be bounded and measurable on $[0,T]$. Then the coupled Riccati system \eqref{eq:coupled-Riccati} satisfies:
\begin{enumerate}
\item[(i)] \emph{(Symmetry.)} $P_i(t)=P_i(t)^\top$ for every $t$ in the interval of existence and every $i$;
\item[(ii)] \emph{(Local well-posedness.)} the system admits a unique absolutely continuous symmetric solution (continuously differentiable wherever $t\mapsto\mu^*(t)$ is continuous) on a maximal subinterval $(t^\dagger,T]\subseteq[0,T]$;
\item[(iii)] \emph{(Global existence.)} if, in addition, the control authority dominates the disturbance in each mode,
\begin{equation}\label{eq:control-dominance}
\Sigma^{ctrl}_i = B_iR_i^{-1}B_i^\top - D_iS_i^{-1}D_i^\top \;\succeq\; 0,\qquad i\in\mathcal{I}
\end{equation}
(in particular whenever $S_i$ is large enough that $D_iS_i^{-1}D_i^\top\preceq B_iR_i^{-1}B_i^\top$), then the solution exists, is unique and symmetric on all of $[0,T]$ and obeys $0\preceq P_i(t)\preceq \Pi_i(t)$, where $\{\Pi_i\}$ solves the linear coupled Lyapunov--Metzler flow $-\dot\Pi_i = Q_i+A_i^\top\Pi_i+\Pi_i A_i+\sum_{j\neq i}\mu_{ij}^*(\Pi_j-\Pi_i)$, $\Pi_i(T)=Q_{T,i}$.
\end{enumerate}
More generally, conclusion (iii) holds whenever a bounded symmetric supersolution of \eqref{eq:coupled-Riccati} exists on $[0,T]$.}
\end{proposition}

\begin{proof}
\rev{(i) Transposing \eqref{eq:coupled-Riccati} and using symmetry of $Q_i$, $\Sigma^{ctrl}_i$ and $Q_{T,i}$, the tuple $(P_i^\top)_{i}$ solves the same terminal-value problem as $(P_i)_{i}$; since the right-hand side is locally Lipschitz (see (ii)), uniqueness forces $P_i=P_i^\top$.
(ii) The right-hand side of \eqref{eq:coupled-Riccati} is a quadratic (hence locally Lipschitz) map of the stacked tuple $\mathbf{P}$ with bounded measurable time-dependence through $\mu^*(\cdot)$; the Carath\'eodory existence--uniqueness theorem gives a unique absolutely continuous solution on a maximal interval ending at $T$.
(iii) Under \eqref{eq:control-dominance} the quadratic term obeys $-P_i\Sigma^{ctrl}_iP_i\preceq0$, so each $P_i$ is a subsolution of the linear flow defining $\Pi_i$, namely $-\dot P_i \preceq Q_i+A_i^\top P_i+P_iA_i+\sum_{j\neq i}\mu_{ij}^*(P_j-P_i)$. The coupling $\sum_{j\neq i}\mu_{ij}^*(\,\cdot_j-\cdot_i)$ is a Metzler (order-preserving, quasimonotone) operator on the cone of symmetric tuples, so the comparison theorem for coupled matrix Riccati/Lyapunov equations \citep{AbouKandil2003Matrix,Dragan2013Robust} yields $P_i(t)\preceq\Pi_i(t)$. The lower bound $P_i(t)\succeq0$ holds because $Q_i,Q_{T,i}\succeq0$ and the flow leaves the PSD cone invariant. Since the linear flow $\Pi_i$ is globally defined on $[0,T]$, the bound $0\preceq P_i\preceq\Pi_i$ is uniform, which precludes finite escape and extends the local solution of (ii) to all of $[0,T]$. The same comparison applies verbatim with any bounded symmetric supersolution in place of $\Pi_i$.}
\end{proof}

\begin{corollary}[Well-posedness of the entropy-regularized coupled flow]\label{cor:entropy-coupled-wellposed}
\rev{Assume the hypotheses of Proposition~\ref{prop:riccati-wellposed}, control dominance \eqref{eq:control-dominance}, and the rate-envelope assumption following \eqref{eq:bilinear-rates}. Then the entropy-regularized backward system in Proposition~\ref{prop:game-metzler-reduction} admits a unique global solution $(\mathbf{P},\mathbf{k})$ on $[0,T]$. The associated outer saddle, equilibrium rates and closed-loop generator are uniquely determined at every time by \eqref{eq:entropy-logit-f}--\eqref{eq:entropy-selected-rates}. The regime process therefore has a unique closed-loop generator $\mathcal{M}(\mu^{*,\tau}(\mathbf{k}(t)))$ along the solution, and the controlled regime-switching diffusion it drives is the unique equilibrium realization of the hybrid system. In the frozen-coefficient reading, if that generator is irreducible, it admits a unique invariant distribution.}
\end{corollary}

\begin{proof}
\rev{The smooth logit selector makes $\mu^{*,\tau}(\mathbf{k})$ locally Lipschitz in $\mathbf{k}$. Hence the stacked $(\mathbf{P},\mathbf{k})$ vector field is locally Lipschitz and has a unique maximal solution. Because the action sets are finite simplices, the admissible rates are uniformly bounded. Under \eqref{eq:control-dominance}, the Riccati comparison argument of Proposition~\ref{prop:riccati-wellposed} gives a uniform bound for $\mathbf{P}$ on $[0,T]$. The right-hand side of \eqref{eq:outer-regularized-hji} has at most linear growth in $\mathbf{k}$ once $\mathbf{P}$ is bounded. A Gr\"onwall estimate therefore bounds $\mathbf{k}$ on $[0,T]$. No finite escape is possible, so the local solution extends uniquely to the full horizon.}
\end{proof}

\begin{remark}[Uniqueness of the equilibria]\label{rem:uniqueness}
\rev{For a fixed bounded rate path, strict convex--concavity of the inner LQG Hamiltonian makes the affine feedback \eqref{eq:inner-lqg-sol} unique, and Proposition~\ref{prop:riccati-wellposed} makes the inner value unique. The unregularized outer matrix game is different: its pointwise value is unique, but degenerate games may have multiple mixed saddles inducing different per-edge transition rates. The entropy-enriched payoff \eqref{eq:outer-regularized-cost} repairs this defect within the LQ model itself: the logit equations \eqref{eq:entropy-logit-f}--\eqref{eq:entropy-logit-g} give a unique smooth saddle and therefore a single-valued equilibrium generator. Corollary~\ref{cor:entropy-coupled-wellposed} then gives a unique coupled backward solution under control dominance. If one wishes to retain the unregularized outer payoff, an alternative sufficient condition is \emph{rate identifiability}: every outer saddle must induce the same bounded rate vector even if the mixed strategies differ.}
\end{remark}

\subsection{Spectral Structure of the Selected Generator}
\label{subsec:spectral-analysis}

\rev{The spectral consequence of the outer game is operational: along the equilibrium path, the selected generator dissipates regime-value disagreement, while the selected running source injects it. The smooth logit saddle makes the generator a single-valued continuous function of the continuation-value gaps. It does not imply that the game maximizes a spectral gap.}

\begin{proposition}[Continuity of the equilibrium spectral gap]\label{prop:spectral-gap-continuity}
\rev{Let $K\subset\mathbb{R}^{|\mathcal{I}|}$ be compact. Assume that for every $\mathbf{k}\in K$, the entropy-selected generator $\Pi_\tau^*(\mathbf{k})$ is irreducible and reversible with respect to a common invariant distribution $\boldsymbol{\pi}$. Define
\[
\mathcal{L}_\tau(\mathbf{k})
:=
-\Pi_\tau^*(\mathbf{k}),
\qquad
\widehat{\mathcal{L}}_\tau(\mathbf{k})
:=
D_{\boldsymbol{\pi}}^{1/2}\mathcal{L}_\tau(\mathbf{k})D_{\boldsymbol{\pi}}^{-1/2},
\]
where $D_{\boldsymbol{\pi}}:=\mathrm{diag}(\pi_1,\ldots,\pi_{|\mathcal{I}|})$, and let $\lambda_2(\mathbf{k})>0$ be the spectral gap of $\mathcal{L}_\tau(\mathbf{k})$ in $L^2(\boldsymbol{\pi})$. Then there exists $C_K<\infty$ such that for all $\mathbf{k},\boldsymbol{\ell}\in K$,
\begin{equation}\label{eq:spectral-gap-lipschitz}
\left|\lambda_2(\mathbf{k})-\lambda_2(\boldsymbol{\ell})\right|
\leq
\left\|
\widehat{\mathcal{L}}_\tau(\mathbf{k})
-
\widehat{\mathcal{L}}_\tau(\boldsymbol{\ell})
\right\|_2
\leq
C_K\|\mathbf{k}-\boldsymbol{\ell}\|_2.
\end{equation}
In particular, $\inf_{\mathbf{k}\in K}\lambda_2(\mathbf{k})>0$.}
\end{proposition}

\begin{proof}
\rev{The smooth logit saddle makes $\Pi_\tau^*(\mathbf{k})$ locally Lipschitz in $\mathbf{k}$ and therefore Lipschitz on the compact set $K$. Common reversibility makes $\widehat{\mathcal{L}}_\tau(\mathbf{k})$ symmetric. The first inequality in \eqref{eq:spectral-gap-lipschitz} is Weyl's eigenvalue perturbation bound; the second follows from the Lipschitz regularity of the selected generator. Irreducibility gives $\lambda_2(\mathbf{k})>0$ pointwise, and continuity on compact $K$ gives a strictly positive minimum.}
\end{proof}

\begin{theorem}[Endogenous closed-loop spectral dissipation]\label{thm:closed-loop-spectral-dissipation}
\rev{Assume Proposition~\ref{prop:spectral-gap-continuity} on a compact set $K$. Write $s:=T-t$ for backward time and retain $\mathbf{k}(s)$ for the backward-time path, with $\mathbf{k}(s)\in K$. Define the selected running source componentwise by
\begin{equation}\label{eq:selected-running-source}
\begin{aligned}
\phi_{\tau,i}(s,\mathbf{k})
:={}&
\mathrm{Tr}\!\bigl(P_i(T-s)\bigr)
+
\tau_D\sum_a f_{\tau,i,a}^*(\mathbf{k})\log f_{\tau,i,a}^*(\mathbf{k})
\\[-2pt]
&-
\tau_A\sum_b g_{\tau,i,b}^*(\mathbf{k})\log g_{\tau,i,b}^*(\mathbf{k}).
\end{aligned}
\end{equation}
The closed-loop outer equation is
\begin{equation}\label{eq:closed-loop-outer-flow}
\frac{d}{ds}\mathbf{k}(s)
=
\boldsymbol{\phi}_\tau(s,\mathbf{k}(s))
-
\mathcal{L}_\tau(\mathbf{k}(s))\mathbf{k}(s),
\qquad
\mathbf{k}(0)=\mathbf{0}.
\end{equation}
For $\mathbf{z}^{\perp}:=\mathbf{z}-(\boldsymbol{\pi}^{\top}\mathbf{z})\mathbf{1}$, its disagreement component satisfies
\begin{align}
\frac{1}{2}\frac{d}{ds}
\|\mathbf{k}^{\perp}(s)\|_{L^2(\boldsymbol{\pi})}^2
&=
\left\langle
\mathbf{k}^{\perp}(s),
\boldsymbol{\phi}_\tau^{\perp}(s,\mathbf{k}(s))
\right\rangle_{L^2(\boldsymbol{\pi})}
-
\left\langle
\mathbf{k}^{\perp}(s),
\mathcal{L}_\tau(\mathbf{k}(s))\mathbf{k}^{\perp}(s)
\right\rangle_{L^2(\boldsymbol{\pi})}
\notag\\
&\leq
\|\mathbf{k}^{\perp}(s)\|_{L^2(\boldsymbol{\pi})}
\|\boldsymbol{\phi}_\tau^{\perp}(s,\mathbf{k}(s))\|_{L^2(\boldsymbol{\pi})}
-
\lambda_2(\mathbf{k}(s))
\|\mathbf{k}^{\perp}(s)\|_{L^2(\boldsymbol{\pi})}^2.
\label{eq:closed-loop-dissipation}
\end{align}
Consequently,
\begin{equation}\label{eq:closed-loop-filter-bound}
\|\mathbf{k}^{\perp}(s)\|_{L^2(\boldsymbol{\pi})}
\leq
\int_0^s
\exp\!\left(
-\int_r^s\lambda_2(\mathbf{k}(u))\,du
\right)
\|\boldsymbol{\phi}_\tau^{\perp}(r,\mathbf{k}(r))\|_{L^2(\boldsymbol{\pi})}\,dr.
\end{equation}
In particular, Proposition~\ref{prop:spectral-gap-continuity} gives
$\underline{\lambda}_K:=\inf_{\mathbf{q}\in K}\lambda_2(\mathbf{q})>0$, so the exponential weight in \eqref{eq:closed-loop-filter-bound} is bounded above by $e^{-\underline{\lambda}_K(s-r)}$.}
\end{theorem}

\begin{proof}
\rev{Let $\mathsf{P}_{\perp}\mathbf{z}:=\mathbf{z}-(\boldsymbol{\pi}^{\top}\mathbf{z})\mathbf{1}$. Since $\mathcal{L}_\tau(\mathbf{q})\mathbf{1}=\mathbf{0}$ for every generator and the common invariant distribution satisfies $\boldsymbol{\pi}^{\top}\mathcal{L}_\tau(\mathbf{q})=\mathbf{0}^{\top}$ for every $\mathbf{q}\in K$, we have
\[
\mathsf{P}_{\perp}\mathcal{L}_\tau(\mathbf{q})
=
\mathcal{L}_\tau(\mathbf{q})
=
\mathcal{L}_\tau(\mathbf{q})\mathsf{P}_{\perp}.
\]
Applying this fixed projection to \eqref{eq:closed-loop-outer-flow} therefore gives
\[
\frac{d}{ds}\mathbf{k}^{\perp}(s)
=
\boldsymbol{\phi}_\tau^{\perp}(s,\mathbf{k}(s))
-
\mathcal{L}_\tau(\mathbf{k}(s))\mathbf{k}^{\perp}(s).
\]
Reversibility makes $\mathcal{L}_\tau(\mathbf{q})$ self-adjoint and positive semidefinite in $L^2(\boldsymbol{\pi})$. Its restriction to the disagreement subspace satisfies the Poincar\'e inequality
\[
\left\langle
\mathbf{z}^{\perp},
\mathcal{L}_\tau(\mathbf{q})\mathbf{z}^{\perp}
\right\rangle_{L^2(\boldsymbol{\pi})}
\geq
\lambda_2(\mathbf{q})
\|\mathbf{z}^{\perp}\|_{L^2(\boldsymbol{\pi})}^2.
\]
Taking the $L^2(\boldsymbol{\pi})$ inner product of the projected flow with $\mathbf{k}^{\perp}(s)$, then applying Cauchy--Schwarz and the Poincar\'e inequality, yields \eqref{eq:closed-loop-dissipation}.

To obtain the integral estimate, set
\[
y(s):=\|\mathbf{k}^{\perp}(s)\|_{L^2(\boldsymbol{\pi})},
\qquad
b(s):=
\|\boldsymbol{\phi}_\tau^{\perp}(s,\mathbf{k}(s))\|_{L^2(\boldsymbol{\pi})}.
\]
Whenever $y(s)>0$, dividing \eqref{eq:closed-loop-dissipation} by $y(s)$ gives
\[
y'(s)
\leq
b(s)-\lambda_2(\mathbf{k}(s))y(s).
\]
At points where $y(s)=0$, the same scalar inequality holds in the upper-right Dini-derivative sense. The standard comparison principle therefore applies. Multiplying by the integrating factor
$\exp(\int_0^s\lambda_2(\mathbf{k}(u))\,du)$ and using $y(0)=0$ gives \eqref{eq:closed-loop-filter-bound}. Finally, Proposition~\ref{prop:spectral-gap-continuity} supplies the uniform lower bound $\underline{\lambda}_K>0$ on $K$, which yields the stated exponential envelope.}
\end{proof}

\rev{Theorem~\ref{thm:closed-loop-spectral-dissipation} gives the gap an endogenous meaning: $\lambda_2(\mathbf{k}(s))$ controls from below the instantaneous rate at which the played outer game dissipates disagreement, against injection by the selected running source. The trace term in \eqref{eq:selected-running-source} is the Riccati risk contribution; the entropy terms account for regularized macro actions. Holding the selected generator fixed recovers the usual frozen-generator convolution with $e^{-\lambda_2(s-r)}$. The result concerns the played gap along the equilibrium path; it does not assert that the game maximizes $\lambda_2$, that the gap is monotone in risk differences, or that entropy induces irreducibility, reversibility, or a common invariant distribution, which remain structural assumptions on the admissible rate matrices.}

\subsection{Relation to the Lyapunov--Metzler and Riccati--Metzler frameworks}\label{subsec:LM}

\rev{The inner LQ reduction \eqref{eq:coupled-Riccati} is, structurally, a coupled differential Riccati system for a Markov-jump linear system in which the off-diagonal transition rates enter through the Metzler operator $\mathcal{M}$. In its algebraic and synthesis forms this is closely related to the \emph{Lyapunov--Metzler} (LM) inequalities introduced by Geromel and Colaneri for switched linear and nonlinear systems \citep{GeromelColaneri2006,ColaneriGeromelAstolfi2008}, and to the \emph{Riccati--Metzler} (RM) conditions that arise when a continuous control input is added \citep{GeromelDeaectoDaafouz2013}. The two share the same Metzler-coupled quadratic structure; the correspondence and the point of departure are developed below.}

\rev{In the LM/RM framework one seeks symmetric $P_i\succ0$ and a Metzler matrix $\Pi=[\pi_{ij}]$ (row sums zero, $\pi_{ij}\ge0$ for $j\neq i$) satisfying a coupled inequality of the form $A_i^\top P_i + P_i A_i - P_i\Sigma^{ctrl}_iP_i + \sum_{j}\pi_{ij}P_j \prec 0$; the induced min-switching rule $\sigma=\arg\min_i x^\top P_i x$ then certifies stability or guaranteed performance. The equilibrium flow \eqref{eq:coupled-Riccati} is exactly of this Metzler-coupled Riccati type, with $\mathcal{M}(\mu^{*,\tau}(\mathbf{k}))$ in the role of the Metzler matrix. This reduction is classical: the LM/RM literature establishes that Markov-jump and switched LQ problems are naturally treated this way, and Proposition~\ref{prop:riccati-wellposed} is its finite-horizon, two-player counterpart.}

\rev{The defining feature of the present setting is that the Metzler coupling is \emph{not} selected by a single minimizing scheduler. In LM/RM, $\Pi$ is one decision variable optimized by one player (min-switching) to stabilize the plant. Here the rates are the equilibrium outcome of a \emph{two-player zero-sum game played on the generator}: under the bilinear parameterization $\mu_{ij}(f,g)=\bar\mu_{ij}+f^\top\Lambda_{ij}g$, the equilibrium operator $\mathcal{M}(\mu^{*,\tau}(\mathbf{k}))$ is generated row-wise by the unique entropy-regularized saddle in \eqref{eq:outer-regularized-hji}. This outer game is itself coupled to a robust inner differential game through the value-dependent running cost. As $\tau_D,\tau_A\downarrow0$, its value approaches that of the corresponding unregularized local matrix game; at positive temperature, entropy supplies a canonical continuous selection of the contested rates.}

\begin{proposition}[Reduction to the Riccati--Metzler and Lyapunov--Metzler equations]\label{prop:lm-rm-reduction}
\rev{Fix the outer policies at any constant admissible pair $(f^\circ,g^\circ)$, so the transition rates freeze at $\mu_{ij}^\circ=\mu_{ij}(f^\circ,g^\circ)$ and the Metzler operator $\mathcal{M}(\mu^\circ)$ is constant.
\begin{enumerate}
\item[(i)] \emph{(Riccati--Metzler.)} The coupled game Riccati flow \eqref{eq:coupled-Riccati} becomes
  \begin{equation*}
  \begin{aligned}
  -\dot P_i
  ={}& Q_i + A_i^\top P_i + P_i A_i
  - P_i\bigl(B_iR_i^{-1}B_i^\top - D_iS_i^{-1}D_i^\top\bigr)P_i \\
  &+ \sum_{j\neq i}\mu^\circ_{ij}\,(P_j-P_i),
  \qquad P_i(T)=Q_{T,i};
  \end{aligned}
  \end{equation*}
removing the inner disturbance channel ($D_i=0$, equivalently $S_i\to\infty$) leaves $-\dot P_i = Q_i + A_i^\top P_i + P_i A_i - P_i B_iR_i^{-1}B_i^\top P_i + \sum_{j\neq i}\mu^\circ_{ij}(P_j-P_i)$, the finite-horizon coupled Riccati--Metzler equation of the Markov-jump LQ regulator with generator $\mu^\circ$ \citep{GeromelDeaectoDaafouz2013}.
\item[(ii)] \emph{(Lyapunov--Metzler.)} Removing in addition the continuous control ($B_i=0$), a symmetric $P_i\succ0$ rendering the right-hand side negative definite satisfies the coupled Lyapunov--Metzler inequality $A_i^\top P_i + P_i A_i + \sum_{j}\pi_{ij}P_j\prec0$ with $\pi=\mu^\circ$, whose min-switching rule $\sigma=\arg\min_i x^\top P_i x$ is the certificate of \citep{GeromelColaneri2006,ColaneriGeromelAstolfi2008}.
\item[(iii)] \emph{(Strict extension.)} The present model is recovered from (i) by replacing the frozen $\mu^\circ$ with the rate path $\mu^{*,\tau}(\mathbf{k}(t))$ selected by the entropy-regularized outer saddle \eqref{eq:entropy-selected-rates}, and by reinstating the inner adversary $D_iS_i^{-1}D_i^\top\neq0$, which can make $\Sigma^{ctrl}_i$ indefinite. Neither modification is present in the LM/RM syntheses or in the one-designer dual-switching designs of Bolzern, Colaneri and De Nicolao \citep{BolzernColaneriDeNicolao2014,BolzernColaneriDeNicolao2016}, where deterministic scheduling and stochastic jumps coexist under a single designer; here the discrete transitions are themselves the contested object of a continuous-time minimax, co-determined with the inner robust controller.
\end{enumerate}}
\end{proposition}
\begin{proof}
\rev{Setting $(f,g)\equiv(f^\circ,g^\circ)$ in \eqref{eq:coupled-Riccati} freezes the rates and substitutes $\Sigma^{ctrl}_i=B_iR_i^{-1}B_i^\top-D_iS_i^{-1}D_i^\top$, giving the displayed flow; $D_i=0$ deletes the disturbance term and $B_i=0$ deletes the quadratic term, leaving the linear Lyapunov--Metzler operator, and the min-switching certificate follows from the coupled inequality in the standard way \citep{GeromelColaneri2006}. Part (iii) is immediate from \eqref{eq:entropy-selected-rates} and the definition of $\Sigma^{ctrl}_i$.}
\end{proof}

\rev{The comparison also clarifies the solution concept. In Lyapunov--Metzler theory the certificate is the piecewise-quadratic, generally non-smooth function $x\mapsto\min_i x^\top P_i x$. This is analogous to the non-smooth value envelopes that arise in optimal switching. The present framework instead keeps the regime-indexed viscosity objects $\{V_i\}$ and $\{U_i\}$ directly. In a degenerate one-designer instantaneous-switching limit one expects an envelope of the form $\min_i V_i$; establishing that singular limit is beyond the scope of this paper. The analogy positions the objects without identifying them.}

\rev{Finally, the comparison locates the new analytical content. In LM/RM stabilization the quadratic term carries only the control channel $B_iR_i^{-1}B_i^\top\succeq0$, and global solvability is an infinite-horizon linear-matrix-inequality (LMI) feasibility question. Reinstating the inner adversary replaces this term by $\Sigma^{ctrl}_i=B_iR_i^{-1}B_i^\top-D_iS_i^{-1}D_i^\top$, which can be indefinite; outside a disturbance-attenuation regime, the coupled flow \eqref{eq:coupled-Riccati} may exhibit finite escape and global existence on $[0,T]$ is no longer automatic. Proposition~\ref{prop:riccati-wellposed} supplies a sufficient global-solvability condition through an order-preserving comparison against the linear Lyapunov--Metzler flow $\Pi_i$. The Metzler coupling therefore enters as a time-varying operator co-determined with a potentially indefinite, finite-horizon Riccati flow.}

\rev{The positioning relative to the Lyapunov--Metzler and Riccati--Metzler literature can be stated along five axes. The Metzler rate matrix is a single design variable in LM/RM, whereas here it is the equilibrium of an entropy-regularized zero-sum game. The switching law is shaped by one minimizing scheduler in LM/RM, and by a competing attacker and stabilizer in our setting. The continuous layer is absent in the Lyapunov--Metzler case and is a single controller in the Riccati--Metzler case, whereas here it is a robust inner differential game. The coupling is single-level in LM/RM and bilevel in our formulation, with the inner value feeding the outer rate game. Finally, the LM/RM results are predominantly infinite-horizon LMI syntheses, whereas the present treatment is a finite-horizon coupled Riccati flow.}

\section{Application: Cross-layer Avellaneda-Stoikov Game }\label{sec:application}

To demonstrate the efficacy of the games-in-games architecture, we apply the framework to a high-frequency market making problem under regime uncertainty. We extend the classical Avellaneda–Stoikov (AS) inventory management model \citep{avellaneda2008high} to a hierarchical hybrid-systems framework with two coupled decision layers.
At the inner layer, a zero-sum differential game is played between a \emph{Market Maker} (MM), who controls inventory and quoting decisions, and a \emph{Strategic Predator} (SP), who acts adversarially by perturbing the short-term price drift.

At the outer layer, a separate strategic game governs the evolution of market regimes. The outer players, the \emph{macro-attacker} and the \emph{macro-stabilizer}, select discrete actions that parameterize the generator of a controlled Markov jump process over market regimes. To simplify the exposition, we assume binary actions for $\{ \texttt{off}, \texttt{stab}\}$ macro-stabilizer and $\{ \texttt{off}, \texttt{att}\}$ for macro-attacker. Hence the mixed strategies can be each parameterized by a single parameter within $[0,1]$. The macro strategy pair $(f_t, g_t)$ thus induces a transition-rate matrix governing switches between calm, volatile, and stressed market conditions. Through these rate controls, the outer game shapes the stochastic environment faced by the inner inventory game.

\rev{For concreteness, we record how the abstract primitives of Definition~\ref{def:hybrid} instantiate in this application. The continuous state is $X_t=(S_t,q_t,m_t)$ (mid-price, inventory, cash) and the regimes $\mathcal{I}$ are the market conditions (calm/volatile/stressed). The inner control--disturbance pair $(\nu,\omega)$, with realized values $(u,w)$, is the MM's quoted spread $u=(u^a,u^b)$ against the SP's drift perturbation $w$; the inner running and terminal costs specialize to the exponential-utility certainty equivalent of terminal wealth. The outer action sets $(\mathcal{A}_D,\mathcal{A}_A)$ are the stabilizer/attacker moves that set the transition-rate kernel $\mu_{ij}(f,g)$. Under this dictionary the inner Isaacs system \eqref{eq:HJI-inner} becomes the modified Avellaneda-Stoikov Hamilton-Jacobi-Bellman (HJB) equation, the outer system \eqref{eq:HJI-outer} becomes the macro Isaacs equation \eqref{eq:macroHJI}, and the exponential transform plays the role that the quadratic value ansatz plays in the LQ case of Section~\ref{sec:case}.}

\subsection{Market Dynamics \& Game Formulation}

Let the market operate in one of $N$ regimes, $I_t \in \mathcal{I}$. The mid-price $S_t \in \mathbb{R}_+$ follows a controlled diffusion process:
\begin{equation}
dS_t = w_t dt + \sigma(I_t) dW_t,
\end{equation}
where $\sigma(i)$ is the regime-dependent volatility, and $w_t$ is the drift controlled by the inner adversary SP.

The MM holds inventory $q_t \in \mathcal{Q} = \{-Q_{\max}, \dots, Q_{\max}\}$ and cash $m_t \in \mathbb{R}$. The inventory dynamics are pure jump processes driven by the execution of limit orders:
\begin{equation}
dq_t = dN^b_t - dN^a_t,
\end{equation}
where $N^b_t$ and $N^a_t$ are Poisson processes with intensities $\Lambda^b(u^b) = A e^{-k u^b}$ and $\Lambda^a(u^a) = A e^{-k u^a}$, controlled by the MM's spreads $u^b , u^a \in \mathbb{R}_{\geq 0}$ and the market depth parameters $A, k$.
Mapping to our general framework (Def.~\ref{def:hybrid}), we have the physical state: $X_t = (S_t, q_t, m_t)$, $x_0 = (S_0, q_0, m_0)$. Note that $S_t$ and $m_t$ are continuous, while $q$ is discrete. The MM is the \textit{micro-player} who chooses the spread $u_t = ( u^a_t, u^b_t)$; the predator is the \textit{micro-adversary} who chooses $w_t$ (price drift). The jump rate is state-independent, with magnitude $\rho (z)=1$. 

At the outer layer, the regime transition rates $\mu_{ij}(f,g)$ are controlled by a \textit{Macro-Attacker} ($f_t$) who seeks to maximize the MM's disutility (or induce a ``Crisis'' regime where $\sigma$ is high); and a \textit{Stabilizer} ($g_t$) who seeks to maintain the ``Calm'' regime.
The outer value function $U_i(t,q)$ is computed by substituting the inner value $v_{i,q}$ into the outer objective.

We formulate the inner layer as a zero-sum differential game. The MM maximizes the Constant Absolute Risk Aversion (CARA) utility of terminal wealth, $U(x_0) = -\exp(-\gamma (m_T + q_T S_T))$, where $\gamma \geq 0$ is the risk aversion parameter.
The SP observes the MM's inventory $q_t$ and exerts price pressure $w_t$ to minimize the MM's Certainty Equivalent, \rev{with capital usage or manipulation risk represented below by a utility-scaled quadratic penalty with coefficient $\frac{1}{2\xi}$.}

\subsection{Hierarchical Solution: Matrix Exponential and Approximate Equilibrium}

Using the Ansatz $V_i(t, S, q, m) = -\exp(-\gamma(m + qS + \theta_i(t,q)))$, \rev{where $\theta_i(t,q)$ denotes the \emph{inventory risk function} in regime $i$, namely the certainty-equivalent reservation adjustment (in price units) that the MM applies for carrying inventory $q$ to the horizon,} the inner Hamiltonian $\mathcal{H}_i$ decomposes additively due to the separation of drift (price) and jump (execution) controls.
The SP chooses the price-drift perturbation $w$ to minimize the MM's value, trading the marginal price impact against a quadratic manipulation cost $\tfrac{1}{2\xi}w^2$. \rev{Because the value $V_i<0$ is an exponential (CARA) utility, the running manipulation cost is measured in utils, i.e.\ it enters scaled by the local marginal disutility $-V_i>0$; the relevant Hamiltonian component associated with the price drift is therefore}
\begin{equation*}
\min_{w} \Bigl[\, w\,\partial_S V_i \;-\; V_i\,\tfrac{1}{2\xi}\,w^2 \,\Bigr].
\end{equation*}
\rev{Substituting $\partial_S V_i = -\gamma q\, V_i$ and dividing through by $-V_i>0$ reduces this to the scalar convex problem $\min_{w}\{\gamma q\,w + \tfrac{1}{2\xi}w^2\}$, whose first-order condition $\gamma q + \xi^{-1}w = 0$ yields the closed-form structural reaction function}
\begin{equation}
\label{eq:predator-strat}
w^*(t, q) = -\xi \gamma q.
\end{equation}
This strategy reveals a \emph{mean-averting} behavior: if the MM is long ($q>0$), the SP pushes the price down ($w^* < 0$) to devalue the position; if short, the SP pushes the price up.

Simultaneously, the MM maximizes the trading component:
\begin{equation}
\max_{u^a, u^b \in \mathbb{R}_{\geq 0}} \sum_{side \in \{a,b\}} \Lambda^{side}(u^{side}) \left( 1 - e^{-\gamma(u^{side} + \Delta \theta_{side})} \right),
\end{equation}
where $\Delta \theta_{a} = \theta(q-1)-\theta(q)$ and $\Delta \theta_{b} = \theta(q+1)-\theta(q)$. This recovers the standard AS spread formula adjusted for inventory shadow cost.
Substituting the optimal strategies back into the HJB equation, the predatory term contributes a quadratic penalty scaled by inventory size:
\[
w^* (-\gamma q) - \frac{1}{2\xi}(w^*)^2 = \frac{1}{2}\xi \gamma^2 q^2.
\]
Consequently, the inventory value function $\theta_i(t,q)$ satisfies the coupled system of ODEs:
\begin{align}
\label{eq:modified-AS-HJB}
-\dot{\theta}_i(t,q) &= \underbrace{\frac{1}{2}\gamma \sigma_i^2 q^2}_{\text{Volatility Risk}} + \underbrace{\frac{1}{2}\xi \gamma^2 q^2}_{\text{Predatory Risk}}  + \sum_{side \in \{a,b\}} \frac{A}{\gamma} \left( 1 + \frac{\gamma}{k} \right)^{-\frac{k}{\gamma}} e^{-\gamma \Delta \theta_{side}} \nonumber \\
&+ \sum_{j \neq i} \mu_{ij}(f,g) \frac{1}{\gamma} \left( 1 - e^{-\gamma(\theta_j - \theta_i)} \right).
\end{align}

Under CARA utility, we define the value function and its exponential transformation as:
\begin{equation*}
V_i(t,S,q,m) = -\exp\!\big(-\gamma[m+qS+\theta_i(t,q)]\big), \qquad
v_{i,q}(t) := e^{-\gamma\theta_i(t,q)}.
\end{equation*}
Substituting optimal quotes into the HJB equation reduces the system to a linear ODE:
\begin{equation*}
\dot{v}(t) = M(t)v(t), \qquad
v(T) = \mathbf{1}.
\end{equation*}
The generator $M$ acts on the state space stacked by regimes $i=1,\dots,N$ and inventory $q \in \{-Q_{\max},\dots,Q_{\max}\}$. It decomposes into micro-structure and macro-switching blocks:
$M = D + \big(Q \otimes I_{|\mathcal{Q}|}\big).$
Here, $Q$ is the regime transition matrix ($Q_{ij}=\mu_{ij}$ for $i\neq j$, row-sums zero). The block-diagonal matrix $D = \mathrm{diag}(A_1,\dots,A_N)$ captures the micro-dynamics. Each block $A_i$ is tridiagonal in the inventory dimension $q$, containing
the diagonal risk \& outflow: $\frac{1}{2}\gamma^2(\sigma_i^2+\xi\gamma)q^2 - (\Lambda^{a}_i + \Lambda^{b}_i)$, and off-diagonal entries $\Lambda^{\text{side}}_i e^{-\gamma u^{*,\text{side}}_i}$ at $(q, q\pm 1)$.

For outer parameters piecewise-constant on $[t,T]$, the solution is explicit:
\begin{equation}
v(t) = \exp (- M\tau )\mathbf{1}, \qquad \tau := T-t.
\end{equation}

\paragraph{Regime mixing and expected variance}
\rev{For small horizons $\tau$, we use the Feynman--Kac representation to isolate how expected accumulated variance affects pricing. Retaining that variance channel explicitly and the leading monopoly-rent term gives the approximation} (when $q$ is at the boundaries, remove the coefficient $2$ in the \emph{monopoly rent} term):
\begin{equation}
\label{eq:theta-first}
\theta_i(t,q)
= \frac{q^2}{2}\Big(\gamma\,w_i(\tau)+\gamma^2\xi\,\tau\Big)
- \frac{2A}{\gamma}\Big(1+\frac{\gamma}{k}\Big)^{-k/\gamma}\tau
+ \mathcal{O}(\tau^2).
\end{equation}
Here, $w_i(\tau)$ represents the \emph{expected integrated variance} starting from regime $i$. By expanding the generator $e^{Qu} \approx I + Qu$, we explicitly capture the regime mixing effect. Letting $s_i := \sigma_i^2$:
\begin{align}
\label{eq:wtau-expansion}
w_i(\tau) &:= \int_0^\tau [e^{Qu} s]_i \, du \approx \int_0^\tau [(I + Qu) s]_i \, du \nonumber \\
&= \sigma_i^2\tau + \frac{1}{2}\sum_{j \neq i} \mu_{ij}(\sigma_j^2 - \sigma_i^2)\,\tau^2 + \mathcal{O}(\tau^3).
\end{align}
\rev{Within the expected-variance channel, the $\mathcal{O}(\tau^2)$ term in \eqref{eq:wtau-expansion} captures the \emph{Regime Risk}: the probability-weighted drift into different volatility states. The remainder in \eqref{eq:theta-first} also collects other higher-order microstructure interactions, so \eqref{eq:theta-first} is not a complete second-order expansion of the quoting system.} Thus, for very short horizons, the MM prices using the current regime's volatility. As $\tau$ increases, the pricing formula ``bends'' to incorporate the volatilities of connected regimes.

\begin{remark}[Risk Isomorphism]
The Hamiltonian separability preserves the tractability of the solution while providing a key insight. The presence of a Strategic Predator ($\xi > 0$) is mathematically isomorphic to an increase in market volatility. The MM perceives an \emph{effective volatility} \rev{$\sigma_{eff}^2(i) = w_i(\tau)/\tau + \xi \gamma$, equivalently $C_i(\tau)=\gamma\tau\,\sigma_{eff}^2(i)$ with $C_i(\tau)$ defined below, which reduces to $\sigma_i^2 + \xi\gamma$ at leading order in $\tau$.} This implies that in the presence of predatory order flow, the optimal policy is to widen spreads and liquidate inventory faster, exactly as one would in a high-volatility environment.
\rev{Three implications follow for strategy and parameter interpretation. First, the predator cost $\xi$ and the risk aversion $\gamma$ enter only through the single scalar $\xi\gamma$, so adversarial order flow is observationally equivalent to a more risk-averse MM facing a more volatile tape: without an instrument, calibration cannot separate ``true'' volatility from predation, and a practitioner may fold an estimate of predatory intensity directly into the volatility input of the standard AS formula. Second, because $\sigma_{\mathrm{eff}}^2$ is additive in $\xi\gamma$, the optimal half-spread retains the AS form with $\sigma_i^2$ replaced by $\sigma_{\mathrm{eff}}^2(i)$, so no separate predator-tracking machinery is needed at the quoting layer. Third, the isomorphism is regime-local: since $w_i(\tau)$ already blends the connected regimes' volatilities (cf.\ \eqref{eq:wtau-expansion}), the regime-blended variance-channel approximation predicts pre-emptive widening in regimes that are one transition away from a high-$\sigma$ state. The leading-order counterfactual in Section~\ref{sec:application} isolates the instantaneous effective-volatility channel and does not measure this forward-looking effect.}
\end{remark}

\paragraph{Optimal quotes and time-varying spreads}
Defining the effective risk factor $C_i(\tau) := \gamma w_i(\tau) + \gamma^2\xi \tau$, the inventory indifference pricing implies:
$\Delta\theta_a \approx \Big(\tfrac{1}{2}-q\Big)C_i(\tau), \qquad
\Delta\theta_b \approx \Big(q+\tfrac{1}{2}\Big)C_i(\tau).$
The resulting optimal total spread $u^*$ is:
\begin{equation*}
u^*_{i}(t,q) = u^{*, a}_i + u^{*, b}_i= \frac{1}{\gamma}\ln\!\Big(1+\frac{\gamma}{k}\Big) + \frac{1}{2}C_i(\tau), 
\end{equation*}
\rev{Within the variance-channel approximation, this demonstrates that spreads widen pre-emptively based on future expected volatility $w_i(\tau)$, pricing in the macro-attacker's threat before the regime shift occurs.}

\subsection{Macro Equilibrium (Outer HJI)}

\rev{To retain the notation customary in this application, let $f_t$ denote the Attacker and $g_t$ the Stabilizer; this reverses the local role labels used for $(f_i,g_i)$ in the LQ specialization above, and the minimax order below is correspondingly $\min_g\max_f$.} These controls shape the regime generator $\mu_{ij}(f,g)$ and micro-parameters.
We posit that the macro-agents optimize against the \emph{anticipated} inventory cost priced in by the Market Maker. Thus, the outer-layer running cost $\varphi_i(q;f,g)$ is defined directly by the risk function $\theta_i$ over the horizon $\tau$:
\begin{equation*}
\varphi_i(q;f,g) := \theta_i(t,q) \approx \frac{q^2}{2}\Big(\gamma w_i(\tau; f,g)+\gamma^2\xi\tau\Big) - \frac{2A}{\gamma}\Big(1+\frac{\gamma}{k}\Big)^{-k/\gamma}\tau.
\end{equation*}
This modeling choice implies a \emph{sentiment-driven interaction}: the macro-attacker targets the Market Maker's forward-looking risk assessment, the channel driving liquidity drying, with instantaneous volatility serving only as one input.

The macro value function $U_i(t,q)$, representing the cumulative market stress, satisfies the Isaacs equation:
\begin{equation}
\label{eq:macroHJI}
\begin{aligned}
-\partial_t U_i(t,q) = \min_{g\in\Delta(\mathcal{A}_D)} \max_{f\in\Delta(\mathcal{A}_A)} \Big\{ &\varphi_i(q;f,g) \\
&{}+ \sum_{j\neq i}\mu_{ij}(f,g) \big(U_j(t,q) - U_i(t,q)\big) \Big\}.
\end{aligned}
\end{equation}
\begin{remark}[Behavioral Interpretation]
By utilizing the integrated variance $w_i(\tau)$ within the running cost, this formulation creates a feedback loop where the macro-agents are highly sensitive to \emph{future} regime risks. The switching probability enters twice: once in the MM's pricing ($w_i$) and again in the outer value dynamics ($\sum \mu_{ij} \Delta U$). This creates a ``Hyper-Alert'' equilibrium where attackers preemptively strike as soon as the \emph{expectation} of future volatility rises, mirroring the self-fulfilling nature of liquidity crises.
\end{remark}

Let $\Delta_{ij}(t,q) := U_j(t,q) - U_i(t,q)$ be the stability gap (the cost impact of switching from $i$ to $j$).

\begin{enumerate}
    
\item \emph{Affine Control:} If the transition rates $\mu_{ij}$ are controllable within $[\underline{\mu}_{ij}, \overline{\mu}_{ij}]$ via $\mu_{i j}(f, g)=\mu_{i j}^{0}+f \cdot \lambda_{i j}^{\mathrm{att}}-g \cdot \lambda_{i j}^{\mathrm{stab}}$ \rev{(where $\mu_{ij}^{0}\ge 0$ is the nominal baseline rate from $i$ to $j$, and $\lambda_{ij}^{\mathrm{att}},\lambda_{ij}^{\mathrm{stab}}\ge 0$ are the per-unit-effort rate sensitivities of the attacker and stabilizer, respectively)}, the optimization decouples into pointwise bang-bang switches.
The Attacker ($f$) maximizes the drift toward higher cost regimes, while the Stabilizer ($g$) minimizes it:
\begin{align*}
f^*(t) = \rev{\mathbb{I}_{\left\{ \sum_{j \neq i} \lambda_{ij}^{\text{att}} \Delta_{ij} > 0 \right\}}},  \qquad
g^*(t)  = \rev{\mathbb{I}_{\left\{ \sum_{j \neq i} \lambda_{ij}^{\text{stab}} \Delta_{ij} > 0 \right\}}}.
\end{align*}
This highlights the conflict: when a regime switch is dangerous ($\Delta_{ij} > 0$), the Attacker pushes the accelerator ($\overline{\mu}$) while the Stabilizer slams the brake ($\underline{\mu}$).

\item \emph{Quadratic Costs:} We relax the bounded control assumption and instead impose quadratic effort penalties $\frac{\rho_f}{2}f^2$ and $\frac{\rho_g}{2}g^2$ in the outer Hamiltonian. Assuming the transition rates remain affine in effort ($\mu_{ij} = \mu_{ij}^0 + f\lambda_{ij}^{\text{att}} - g\lambda_{ij}^{\text{stab}}$), the control-dependent part of the Hamiltonian is:
\begin{equation*}
\mathcal{H}(f,g) \propto f \left( \sum_{j \neq i} \lambda_{ij}^{\text{att}} \Delta_{ij} \right) - g \left( \sum_{j \neq i} \lambda_{ij}^{\text{stab}} \Delta_{ij} \right) - \frac{\rho_f}{2}f^2 \rev{+ \frac{\rho_g}{2}g^2}.
\end{equation*}
The first-order optimality conditions ($\partial_f \mathcal{H} = 0, \partial_g \mathcal{H} = 0$) \rev{use the sign convention that the maximizer's effort penalty enters with a $-$ sign and the minimizer's with a $+$ sign, making $\mathcal{H}$ concave in $f$ and convex in $g$. They} yield explicit proportional feedback rules:
\begin{align*}
f^*(t) &= \frac{1}{\rho_f} \left[ \sum_{j \neq i} \lambda_{ij}^{\text{att}} \big( U_j(t,q) - U_i(t,q) \big) \right]^+, \\
g^*(t) &= \frac{1}{\rho_g} \left[ \sum_{j \neq i} \lambda_{ij}^{\text{stab}} \big( \rev{U_j(t,q) - U_i(t,q)} \big) \right]^+,
\end{align*}
where $[x]^+ = \max(0, x)$.
This result characterizes the macro-agents as \emph{variable-gain controllers}: the intensity of their intervention scales linearly with the severity of the stability gap. For instance, the Attacker exerts minimal effort when the system is robust ($\Delta_{ij} \approx 0$) but surges activity proportionally as the MM's inventory vulnerability increases ($\Delta_{ij} \gg 0$).
\end{enumerate}

\subsection{Numerical Illustration}

We demonstrate the equilibrium strategies and Risk Isomorphism principle using calibrated Bitcoin (BTC) market data. The experiment compares two market making strategies facing a Strategic Predatory trader: 1.\textit{vanilla AS}: standard AS strategy, unaware of predatory drift; 2.\textit{equilibrium AS}: modified strategy using the leading-order effective volatility \rev{$\sigma_{\text{eff},i}^2 = \sigma_i^2 + \xi \gamma$} to account for predatory risk.

\begin{figure}
    \centering
    \includegraphics[width=\linewidth]{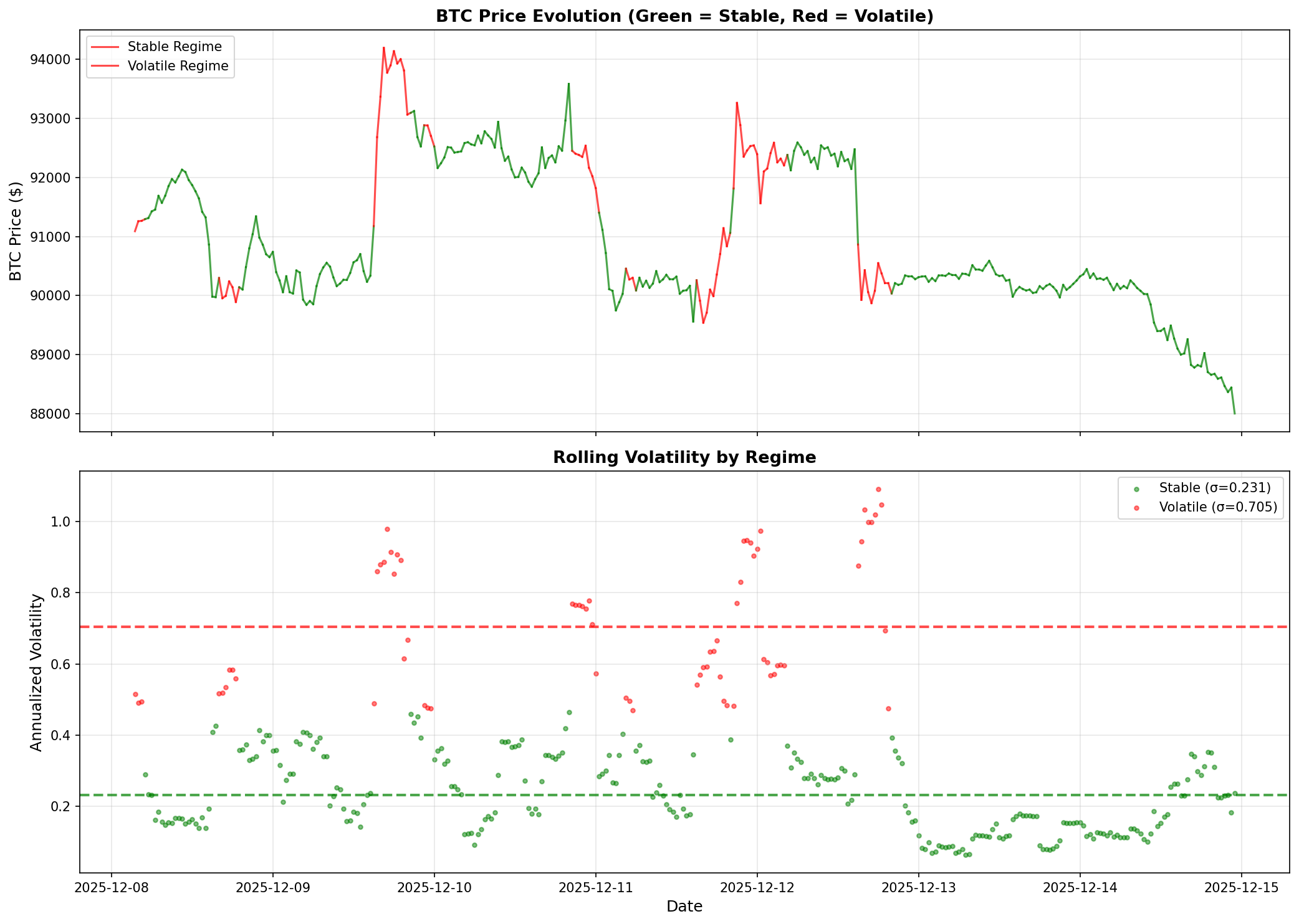}
    \caption{\rev{Regime calibration from Kraken BTC--USD $30$-minute OHLCV data (December 7--14, 2025; $328$ observations). Rolling realized volatility is clustered ($K$-means, $K=2$) into a stable regime ($\sigma_0=23.14\%$ annualized, green; $78.7\%$ of the sample) and a volatile regime ($\sigma_1=70.50\%$ annualized, red; $21.3\%$). Top: price path colored by the inferred regime; bottom: rolling volatility with the two regime means. The empirical transition rates are $\lambda_{01}=1.49$ and $\lambda_{10}=6.17~\mathrm{day}^{-1}$ (symmetric mean $\mu_0=3.83~\mathrm{day}^{-1}$).}}
    \label{fig:btccalibration}
\end{figure}

We calibrate regime-switching parameters from Kraken ticker ``BTC-USD'' 30-minute OHLCV data (December 7--14, 2025). Using rolling volatility with $K$-means clustering, we identify two distinct regimes:
1. \textit{stable regime} (regime 0): $\sigma_0 = 0.2314$ (23.14\% annualized); 2.\ \textit{volatile regime} (regime 1): $\sigma_1 = 0.7050$ (70.50\% annualized). The volatility ratio is $\sigma_1 / \sigma_0 = 3.05$.

\rev{Empirical transition-matrix estimation yields asymmetric rates $\lambda_{01}=1.49$ and $\lambda_{10}=6.17$ per day (stable-to-volatile and volatile-to-stable), i.e.\ a mean base rate $\mu_0=3.83$ per day, a symmetric-mean holding-time scale of about $6.3$ hours, and a shorter-lived volatile regime ($21.3\%$ occupancy). The robustness study (Table~\ref{tab:robustness}) uses these calibrated \emph{asymmetric} rates directly, so its stationary volatile occupancy is $\lambda_{01}/(\lambda_{01}+\lambda_{10})\approx19\%$, close to the observed sample share; the single illustrative trajectory of the next subsection uses the symmetric mean rate $\mu_0$ as a modelling simplification.} Figure~\ref{fig:btccalibration} shows the calibrated regime evolution with price dynamics colored by regime state (green for stable, red for volatile).

\subsubsection{Counterfactual Simulation Design}

We simulate 12 hours of market making activity (December 12, 2025, 15:00--03:00) with the following setup: starting in stable regime ($I_0 = 0$) with initial price $S_0 = \$90{,}863.90$,  
  the market making process lasts for 2,880 total steps, with each step counting for $\Delta t = 15$ seconds.
  We set the price drift to be $0$ so that only the predator affects the drift, whose optimal drift control is $w^*(q) = -\xi \gamma q$ with cost coefficient $\xi = 10.0$.
MM's risk aversion parameter and inventory constraint are set to be $\gamma = 0.02$, $q \in [-10, 10]$. 
  We fit the order arrival to follow Poisson intensity $\Lambda(u) = \lambda_0 e^{-k  u }$ with market depth $A = 250{,}000$ per year, spread sensitivity $k = 10$. 
We simulate 1,000 Monte-Carlo paths for statistical significance.

Both strategies face the same Strategic Predator who observes their inventory in real-time and applies adversarial drift. The key difference is that Vanilla AS uses the actual volatility $\sigma_i$ in the spread formula, while Equilibrium AS uses the leading-order effective volatility \rev{$\sigma_{\text{eff},i} = \sqrt{\sigma_i^2 + \xi \gamma}$} derived from Remark~1 (Risk Isomorphism).

\begin{figure}[htbp]
    \centering
    \includegraphics[width=\linewidth]{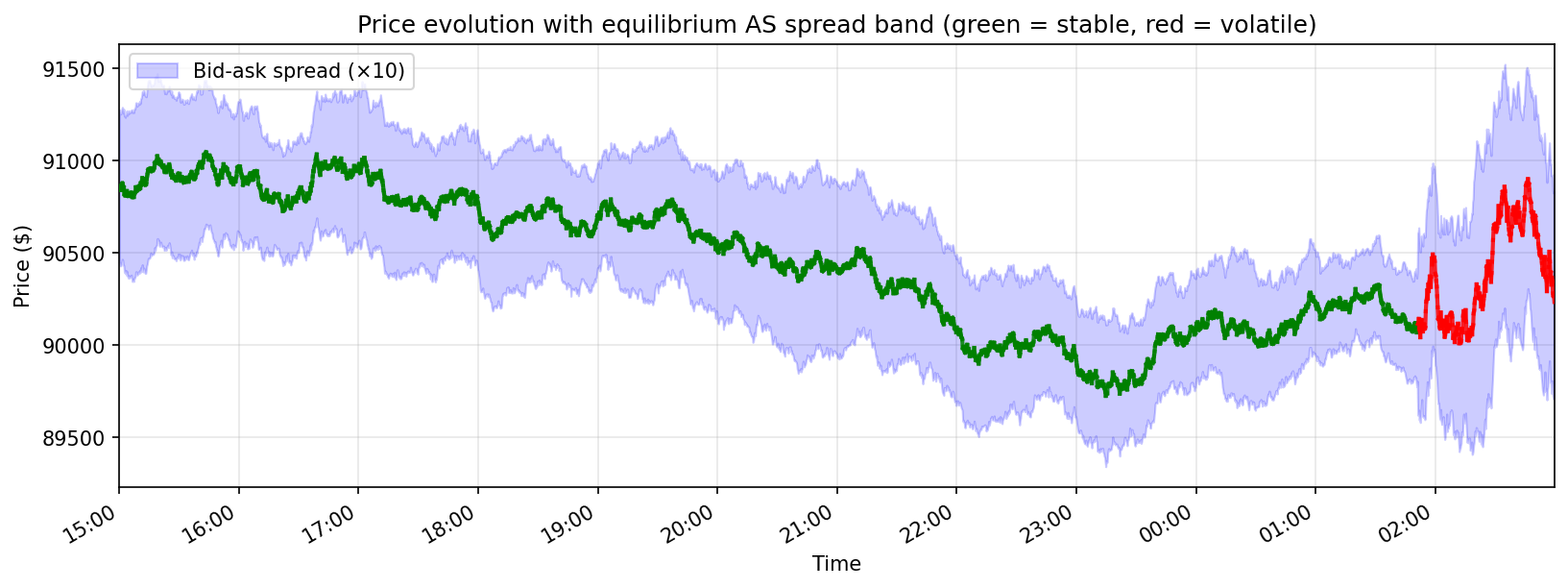}
    \caption{\rev{Illustrative single trajectory:} sample BTC price evolution with equilibrium AS spread bands (10 times the actual spread for clearer visualization).}
    \label{fig:mm-simulation}
\end{figure}

\subsubsection{Results and Behavioral Analysis}

Figure~\ref{fig:mm-simulation} presents \rev{one illustrative} price-evolution path for the counterfactual simulation, with equilibrium AS spread bands, colored by regime state (green for stable, red for volatile). \rev{It alternates between the calibrated stable ($\sigma_0=23.14\%$) and volatile ($\sigma_1=70.50\%$) regimes of Figure~\ref{fig:btccalibration}. This figure and the two that follow are \emph{single-trajectory illustrations} of the mechanism; our primary, statistically-controlled performance evidence is the multi-condition robustness study below (Table~\ref{tab:robustness}, Figure~\ref{fig:robustness}).}

\begin{figure}[htbp]
    \centering
    \includegraphics[width=\linewidth]{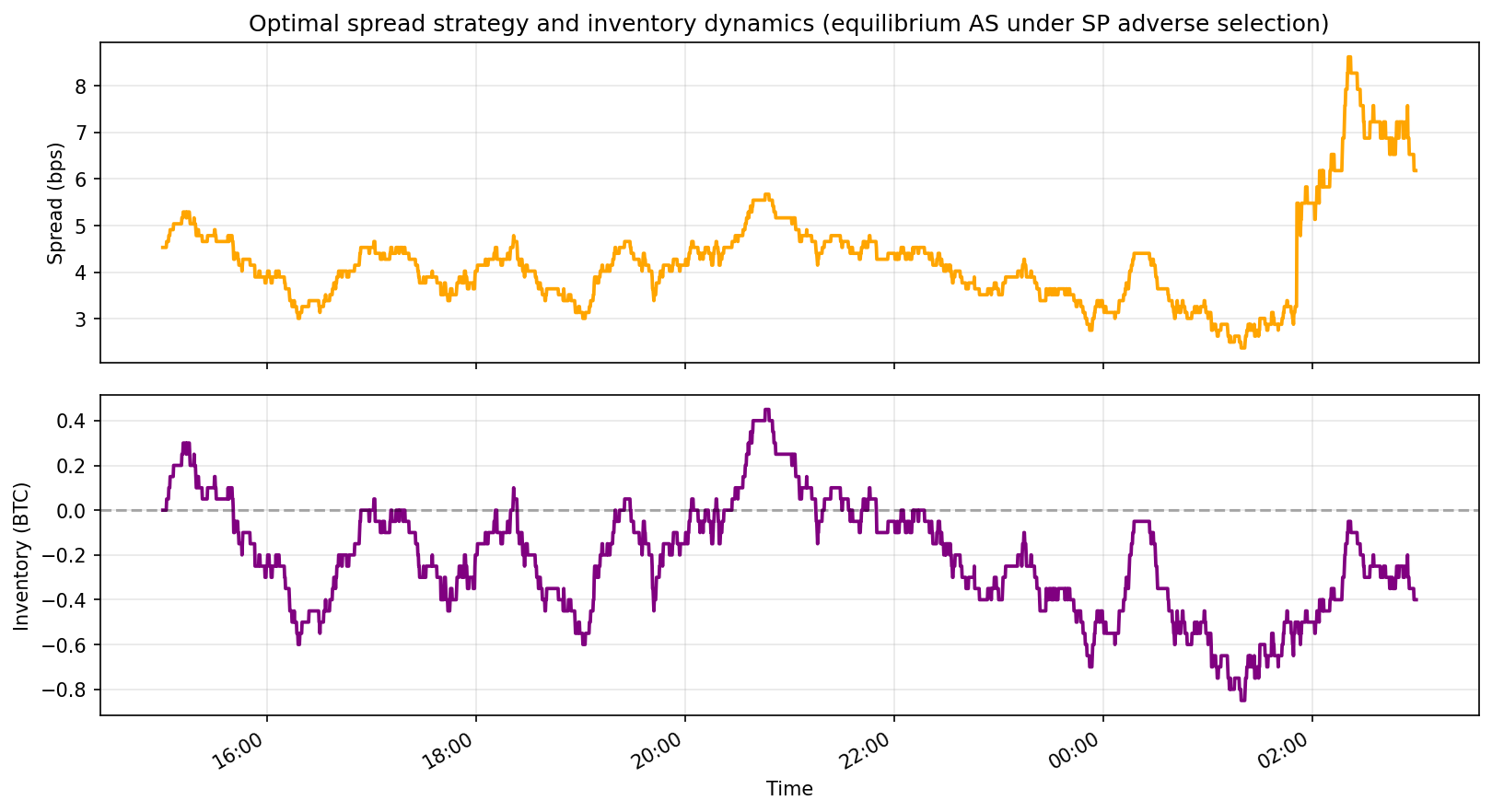}
    \caption{\rev{Illustrative single trajectory:} optimal spread strategy evolution and inventory dynamics of MM under SP adverse selection.}
    \label{fig:inventorydyn}
\end{figure}

Figure~\ref{fig:inventorydyn} shows the dynamic spread adjustment and inventory management. The equilibrium strategy adaptively widens spreads in volatile regimes, incorporating both the heightened market volatility $\sigma_i$ and the predatory risk $\xi \gamma$ into the reservation price calculation.
\rev{Section~\ref{subsec:spectral-analysis} identifies a separate outer-layer prediction: when the macro rate game is played, the selected generator is an endogenous function of continuation-value gaps. In the reversible LQ setting, its spectral gap varies continuously with those gaps and controls the instantaneous dissipation of regime-value disagreement along the closed-loop path. The present counterfactual does not measure this effect: it fixes the macro actions $f=g=0$, so it neither plays an outer rate game nor computes an equilibrium generator or its spectral gap. The spread widening reported here is driven by the inner effective-volatility channel (Risk Isomorphism); extending the AS experiment to a played macro game and tracking the induced generator is left to future work.}

\rev{Figure~\ref{fig:pnl} shows the terminal profit-and-loss (PnL) distribution across the Monte-Carlo paths for this single calibrated window; consistent with the mechanism, the equilibrium strategy shifts the distribution to the right of vanilla AS. This calibrated-window run is illustrative; the controlled quantification across process-parameter regimes, with confidence intervals (CIs), is reported in the robustness study below (Table~\ref{tab:robustness}, Figure~\ref{fig:robustness}).}

\begin{figure}[htbp]
    \centering
    \includegraphics[width=\linewidth]{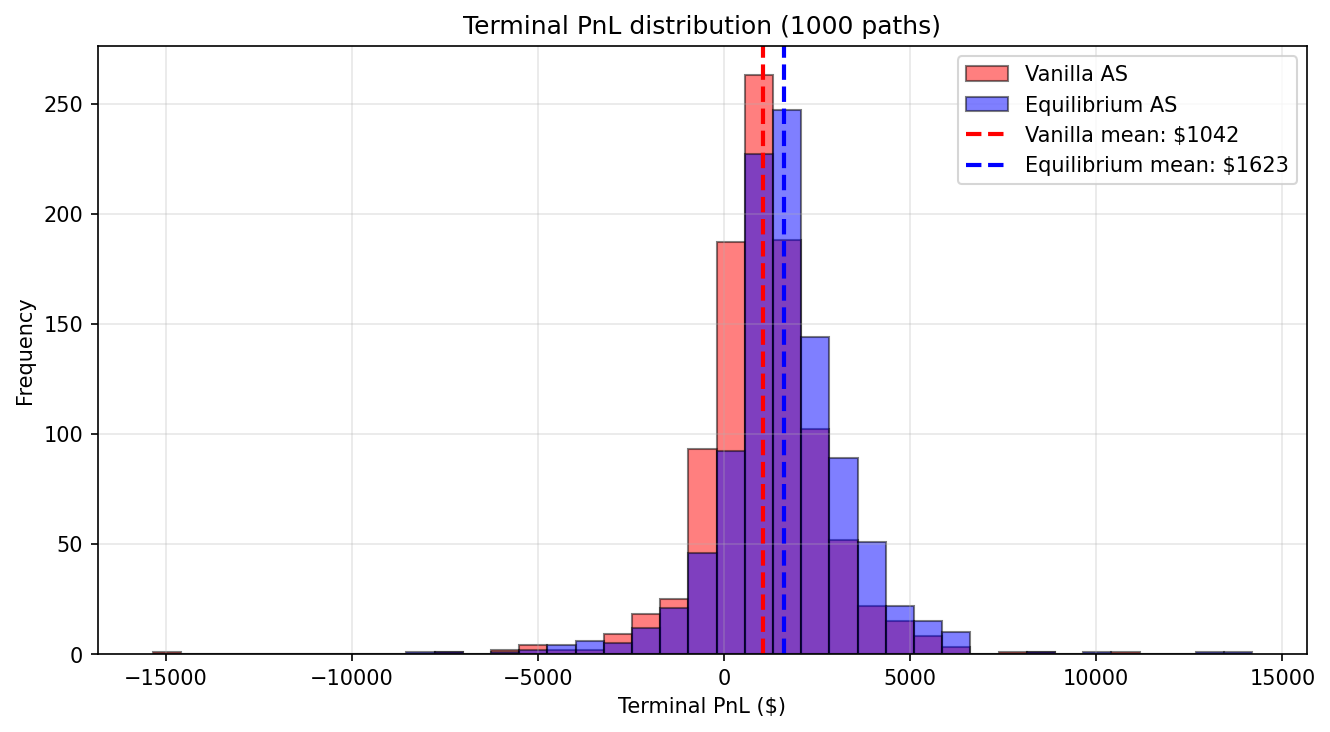}
    \caption{\rev{Illustrative single calibrated-window} distribution of terminal PnL over 1,000 simulated 12-hour episodes (vanilla AS vs equilibrium AS with predator); aggregate multi-condition results are in Table~\ref{tab:robustness}.}
    \label{fig:pnl}
\end{figure}

\rev{The mechanism is the expected one: equilibrium AS quotes wider spreads by pricing the predator into the effective volatility $\sigma_{\mathrm{eff}}^2=\sigma_i^2+\xi\gamma$, and accepts somewhat larger inventory, trading a modest reduction in fill rate (via $\Lambda(u)=\lambda_0 e^{-ku}$) for higher per-fill revenue. The robustness study below reports the net effect across paths, with confidence intervals; the single trajectory is used only as a mechanism illustration.}

\subsubsection{\rev{Robustness across market-condition regimes}}
\rev{Because the available Kraken feed spans only a seven-day window, genuinely independent calm, volatile, and stressed \emph{historical} windows cannot be collected. We therefore probe robustness through a process-parameter sweep: the regime-switching jump-diffusion is calibrated \emph{once} from the seven-day data (Figure~\ref{fig:btccalibration}), and we then sweep the \emph{process parameters} that define a market condition, namely the volatility ratio $\sigma_{\mathrm{vol}}/\sigma_{\mathrm{stab}}$ (regime severity), the base switching rate $\mu_0$ (regime persistence), and the predator strength $\xi\gamma$ (adversarial intensity). For each parameter regime we run an $800$-path Monte-Carlo counterfactual of vanilla versus equilibrium AS facing the same Strategic Predator, using common random numbers across the two strategies for variance reduction (so that the $95\%$ confidence intervals below are computed from paired path differences). The counterfactual uses the \emph{leading-order}, spread-only form of the equilibrium quote: the instantaneous effective half-spread with $\sigma_{\mathrm{eff}}^2(i)=\sigma_i^2+\xi\gamma$ and symmetric bid/ask. It does not use the full regime-blended ($w_i(\tau)/\tau$), inventory-asymmetric policy derived in Sections~\ref{sec:case}--\ref{sec:application}; both strategies are quoted on this same footing, so the comparison is like-for-like.}

\rev{Table~\ref{tab:robustness} reports the outcome. All regimes are calibrated once from the canonical seven-day parameters of Figure~\ref{fig:btccalibration}, using the \emph{asymmetric} transition rates $\lambda_{01}=1.49$, $\lambda_{10}=6.17~\mathrm{day}^{-1}$ directly; the persistence axis scales both rates by a common factor, holding the stationary volatile occupancy at the calibrated $\lambda_{01}/(\lambda_{01}+\lambda_{10})\approx19\%$ (close to the $\approx21\%$ observed sample share) throughout. The equilibrium strategy dominates the vanilla strategy in \emph{every} regime: the mean-PnL gain ranges from $+55\%$ to $+86\%$, all four (normalized) $95\%$ paired-difference intervals exclude zero, and equilibrium AS earns the higher terminal PnL on $83$--$88\%$ of paired paths; its Sharpe-like ratio (terminal-PnL mean/std) exceeds vanilla's in every regime ($\approx\!0.84$--$1.10$ versus $\approx\!0.58$--$0.78$). The one-axis sweeps in Figure~\ref{fig:robustness}(b,c) isolate the drivers. The gain increases monotonically with the predator strength $\xi\gamma$: as $\xi\gamma$ increases from $0.1$ to $0.8$, the mean-PnL improvement rises through $+44\%$, $+86\%$, $+161\%$, and $+331\%$, and the Sharpe-like improvement from $+28\%$ to $+138\%$. This is the quantitative signature of the Risk Isomorphism principle, since the value of pricing in predation grows with its intensity. The gain decreases as the volatility ratio grows, falling from $+98\%$ to $+48\%$ as $\sigma_{\mathrm{vol}}/\sigma_{\mathrm{stab}}$ increases from $1.5$ to $6.0$, because a large $\sigma_i^2$ dilutes the fixed additive correction in $\sigma_{\mathrm{eff}}^2=\sigma_i^2+\xi\gamma$; in the separate persistence sweep it varies moderately with switching speed, remaining positive at $+49\%$--$+64\%$ as $\bar\mu$ ranges over $2$--$16~\mathrm{day}^{-1}$ at fixed occupancy. The improvement is therefore positive with paired-difference intervals excluding zero across all tested process-parameter regimes, supporting the robustness of the CARA mechanism beyond the baseline calibration. We treat this controlled multi-condition study, in preference to any single-window run, as our primary numerical evidence.}

\rev{The persistence axis also supplies a limited empirical bridge to the spectral analysis of Section~\ref{subsec:spectral-analysis}. For the exogenous two-state chain used in this counterfactual, the nonzero eigenvalue of the negative generator is
\[
\lambda_2^{\mathrm{exo}}
=
\lambda_{01}+\lambda_{10}
=
2\bar{\mu}.
\]
At the calibrated rates, $\lambda_2^{\mathrm{exo}}=7.66~\mathrm{day}^{-1}$, corresponding to a relaxation timescale of approximately $3.1$ hours. Scaling both rates proportionally varies this baseline gap from approximately $4$ to $32~\mathrm{day}^{-1}$ while preserving stationary occupancy; the equilibrium-AS gain remains positive throughout. This is an exogenous spectral-gap stress test showing that the leading-order Risk-Isomorphism comparison is robust across regime-mixing timescales. It is not a numerical verification of Theorem~\ref{thm:closed-loop-spectral-dissipation}, which concerns the endogenous generator selected when the macro rate game is played.}

\begin{table}[htbp]
\centering
\caption{\rev{Robustness of the equilibrium strategy across process-parameter regimes (calibrated once from the canonical seven-day data; $800$ Monte-Carlo paths, paired common random numbers, fixed seed; the transition chain uses the calibrated \emph{asymmetric} rates $\lambda_{01}=1.49,\lambda_{10}=6.17~\mathrm{day}^{-1}$, and the persistence column $\bar\mu$ scales both while holding the stationary volatile occupancy at $\approx19\%$). Equilibrium AS dominates vanilla AS in every regime. The reported intervals are \emph{normalized} paired-difference $95\%$ CIs (the paired absolute-PnL CI divided by the vanilla mean; not an exact ratio CI); ``Eq.\ wins'' is the fraction of paired paths on which equilibrium AS has the higher terminal PnL; and ``Sharpe-like'' is the terminal-PnL mean-to-standard-deviation ratio (not an annualized Sharpe).}}
\label{tab:robustness}
\rev{\resizebox{\textwidth}{!}{\begin{tabular}{@{}lcccccc@{}}
\toprule
Regime & $\sigma_{\mathrm{vol}}/\sigma_{\mathrm{stab}}$ & $\bar\mu=\lambda_2^{\mathrm{exo}}/2$ (day$^{-1}$) & $\xi\gamma$ & $\Delta$ mean PnL (norm.\ 95\% CI) & Sharpe-like (van., eq.) & Eq.\ wins \\
\midrule
Calm              & $1.50$ & $1.9$  & $0.1$ & $+55\%\ (\pm 7)$  & $0.78,\ 1.10$ & $88\%$ \\
Base (calibrated) & $3.05$ & $3.83$ & $0.2$ & $+86\%\ (\pm 12)$ & $0.63,\ 0.99$ & $85\%$ \\
Volatile          & $4.32$ & $7.7$  & $0.2$ & $+56\%\ (\pm 13)$ & $0.58,\ 0.84$ & $83\%$ \\
Stressed          & $6.05$ & $15.3$ & $0.4$ & $+76\%\ (\pm 13)$ & $0.69,\ 0.99$ & $83\%$ \\
\bottomrule
\end{tabular}}}
\end{table}

\begin{figure}[htbp]
    \centering
    \includegraphics[width=\linewidth]{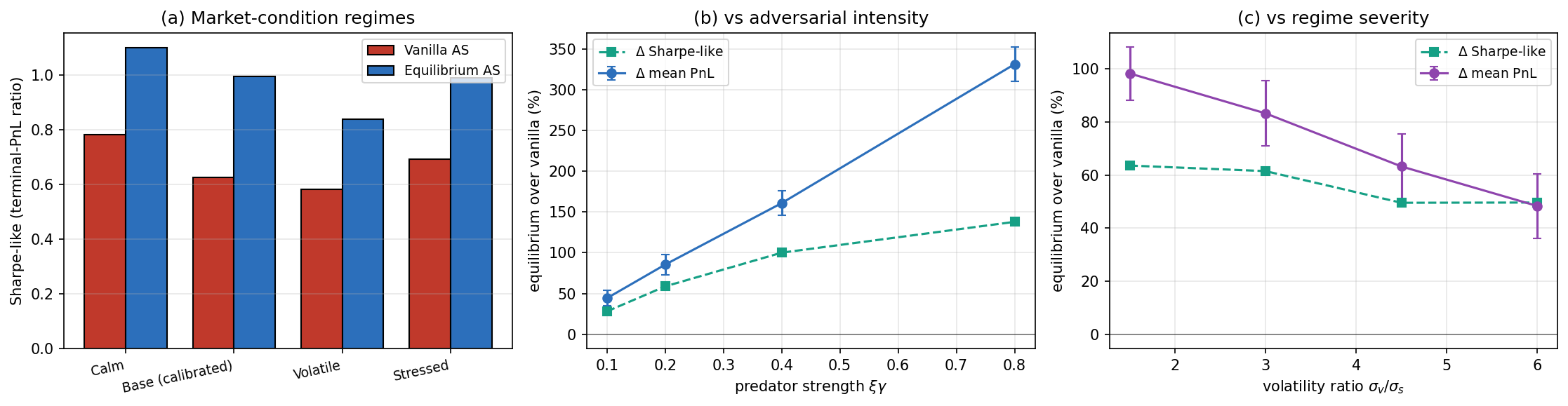}
    \caption{\rev{Process-parameter robustness sweep (calibrated once from the seven-day Kraken data; error bars are normalized paired-difference $95\%$ CIs). (a) Sharpe-like ratio of vanilla vs.\ equilibrium AS across the four market-condition regimes. (b) Equilibrium-over-vanilla improvement vs.\ predator strength $\xi\gamma$: it grows monotonically, the signature of Risk Isomorphism. (c) Improvement vs.\ volatility ratio $\sigma_{\mathrm{vol}}/\sigma_{\mathrm{stab}}$: it \emph{shrinks} as volatility dwarfs the fixed predator correction. The switching-speed axis (mean rate $\bar\mu=\lambda_2^{\mathrm{exo}}/2$, with stationary occupancy held fixed; reported in the text) varies the exogenous spectral gap directly.}}
    \label{fig:robustness}
\end{figure}

\section{Conclusion}\label{sec:conclusion}

\rev{This paper presents a hierarchical ``games-in-games'' control framework for systems governed by \emph{regime-switching jump-diffusions}. By decomposing the problem into a fast inner game and a strategic outer game, we derived a coupled system of Hamilton-Jacobi-Isaacs (HJI) equations via a unified Dynkin identity and gave a feedback Stackelberg verification principle for the bilevel solution. For the LQ specialization this yields a coupled Riccati flow, for which we established symmetry, well-posedness, and a sufficient condition for global existence; we related this flow to the Lyapunov--Metzler and Riccati--Metzler frameworks, recovering them as its non-adversarial, frozen-rate special cases, and characterized the spectral dissipation of the closed-loop generator selected by the outer game. In the Exponential-Affine market-making application the transformed inventory system admits a matrix-exponential representation.}

The framework's practical value is demonstrated through an adversarial market microstructure case study. The results reveal a \emph{Risk Isomorphism} principle, where the hierarchical controller naturally interprets strategic predation as effective volatility, inducing an anticipatory equilibrium that pre-emptively widens spreads. Future research will extend this architecture to partially observable regimes\rev{, where the inner controller must act on a filtered or asynchronously estimated mode, connecting to recent advances in asynchronous filtering and estimation for Markov jump systems~\citep{FangRenWangStojanovicHe2024},} and integrate data-driven learning for empirical transition kernels.

\section*{Declaration of generative AI and AI-assisted technologies in the writing process}
\rev{During the preparation of this work the authors used Claude Opus 4.7 (Anthropic) to assist with the numerical simulation study, and ChatGPT 5.5 (OpenAI) for minor mathematical verification and language editing. After using these tools, the authors reviewed and edited the content as needed and take full responsibility for the content of the publication.}

\end{document}